\definecolor{DarkGreen}{rgb}{0.1,0.5,0.1}
\newcommand{\todo}[1]{\textcolor{DarkGreen}{[To do: #1]}}
\newcommand{\rc}[1]{\textcolor{red}{[Rachel: #1]}}
\newcommand{\mmp}[1]{\textcolor{red}{[MMP: #1]}}
\newcommand{\bhat}{\hat{b}}
\newcommand{\va}{v_A}
\newcommand{\upi}{\underline{\pi}}
\newcommand{\bpi}{\bar{\pi}}
\title{\textsc{The Strange Case of Privacy in Equilibrium Models}}
\author{%
\textsc{Rachel Cummings}\thanks{Computing and Mathematical Sciences, California Institute of Technology, Email: \href{mailto:rachelc@caltech.edu}{rachelc@caltech.edu}. Supported in part by a Simons Award for Graduate Students in Theoretical Computer Science, NSF grant CNS-1254169, US-Israel Binational Science Foundation grant 2012348, and a Google Faculty Research Award.}
\and
\textsc{Katrina Ligett}\thanks{Computing and Mathematical Sciences, California Institute of Technology, Email: \href{katrina@caltech.edu}{katrina@caltech.edu}. Supported in part by NSF grant CNS-1254169, US-Israel Binational Science Foundation grant 2012348, the Charles Lee Powell Foundation, a Google Faculty Research Award, an Okawa Foundation Research Grant, and a Microsoft Faculty Fellowship.}
\and
\textsc{Mallesh M. Pai}\thanks{Department of Economics, University of Pennsylvania. Email: \href{mailto:mallesh@econ.upenn.edu}{mallesh@econ.upenn.edu}. Supported in part by NSF Grant CCF-1101389.}
\and
\textsc{Aaron Roth}\thanks{Department of Computer and Information Science, University of Pennsylvania. Email: \href{mailto:aaroth@cis.upenn.edu}{aaroth@cis.upenn.edu}. Supported in part by NSF Grant CCF-1101389, an NSF CAREER award, and an Alfred P. Sloan Foundation Fellowship.}
}
\begin{document}
\maketitle
\begin{abstract}
We study how privacy technologies affect user and advertiser behavior in a simple economic model of targeted advertising. In our model, a consumer first decides whether or not to buy a good, and then an advertiser chooses an advertisement to show the consumer. The consumer's value for the good is correlated with her type, which determines which ad the advertiser would prefer to show to her---and hence, the advertiser would like to use information about the consumer's purchase decision to target the ad that he shows.

In our model, the advertiser is given only a differentially private signal about the consumer's behavior---which can range from no signal at all to a perfect signal, as we vary the differential privacy parameter. This allows us to study equilibrium behavior \emph{as a function of the level of privacy provided to the consumer}. We show that this behavior can be highly counter-intuitive, and that the effect of adding privacy in \emph{equilibrium} can be completely different from what we would expect if we ignored equilibrium incentives. Specifically, we show that increasing the level of privacy can actually \emph{increase} the amount of information about the consumer's type contained in the signal the advertiser receives, lead to decreased utility for the consumer, and increased profit for the advertiser, and that generally these quantities can be non-monotonic and even discontinuous in the privacy level of the signal.
\end{abstract}

\newpage
\clearpage
\setcounter{page}{1}

\section{Introduction}
As advertising becomes increasingly targeted and data driven, there is growing concern that the algorithms driving these personalization decisions can be \emph{discriminatory}. For example, \cite{DTD15} highlighted potential gender bias in Google's advertising targeting algorithms, giving evidence that male job seekers were more likely to be shown ads for high paying jobs than female job seekers.  Similarly, the FTC has expressed concern that data mining of user online behavior, which data brokers use to categorize users into categories such as ``ethnic second-city struggler'' and ``urban scrambler'', is used to selectively target users for high interest loans \citep{newsarticle}.

One tempting response to such concerns is regulation: for example, we could mandate the use of privacy technologies which would explicitly limit the amount of information advertisers could learn about users past behavior in a quantifiable way.\footnote{An alternative approach to limiting the information that advertisers can collect is to explicitly try to limit how they use that information to avoid unfairness. See \cite{DHPRZ12} for work in this direction.} If advertisers are only able to see differentially private signals about user behavior, for example, then we can precisely quantify the amount of information that the advertiser's signal contains about the actions of the user. As we increase the level of privacy, we would naively expect to see several effects: first, the amount of information that the advertiser learns about the user should decrease. Second, the utility of the advertiser should decrease, since she is now less able to precisely target her advertisements. Finally, if the user really was experiencing disutility from the way that the advertiser had been targeting her ads, the user's utility should increase.

These expectations are not necessarily well grounded, however, for the following reason: in strategic settings, as the information content of the signal that the advertiser receives changes, he will change the way he uses the information he receives to target ads. Similarly, given the way that user behavior is used in ad targeting, a sophisticated user may change her browsing behavior. Hence, it is not enough to statically consider the effect of adding privacy technologies to a system, but instead we must consider the effect in equilibrium. Each privacy level defines a different strategic interaction which results in different equilibrium behavior, and a-priori, it is non-obvious the effect that the privacy technology will have on the equilibrium outcome.

In this paper, we consider a very simple two-stage model of advertising that may be targeted based on past user behavior.  The model has three rational agents: a consumer, a seller, and an advertiser. The consumer has a value for the good being sold by the seller, and also a type which determines which of several ads the advertiser benefits most from showing the consumer. The consumer's value and type are drawn from a joint distribution, and the two are correlated --- hence, information about the consumer's value for the good being sold is relevant to the advertiser when determining what ad to show her.

The game proceeds in two stages. In the first stage, the seller determines a price to set for the good he is selling --- then the consumer determines whether or not she wishes to buy the good. In the second stage, the advertiser receives some signal about the purchase decision of the consumer in the first round. The signal may be noisy; the correlation of the signal with the consumer's purchase decision reflects the level of privacy imposed on the environment, and is quantified via differential privacy. As a function of the signal, the advertiser performs a Bayesian update to compute his posterior belief about the consumer's type, and then decides which ad to show the consumer. The consumer has a preference over which ad she is shown --- for example, one might be for a credit card with a lower interest rate, or for a job with a higher salary. Hence, the consumer does not necessarily act myopically in the first round when deciding whether to purchase the seller's good or not, and instead takes into account the effect that her purchase decision will have on the second round.

We characterize the equilibria of this model as a function of the level of differential privacy provided by the signal the advertiser receives. We show that in this model, several counter-intuitive phenomena can arise.  For example, the following things may occur in equilibrium as we increase the privacy level (i.e. decrease the correlation between the user's purchase decision and the signal received by the advertiser):
\begin{enumerate}
\item The signal received by the advertiser can actually contain \emph{more} information about the agent's type, as measured by mutual information (Figure \ref{fig.mutualinfo}). Similarly, the difference in the advertiser's posterior belief about the true type of the agent after seeing the purchase/ did not purchase noisy bits can increase (Figure \ref{fig.equilboundaries}). (Interestingly, this difference does not necessarily peak at the same privacy level as the mutual information between the consumer's type and the advertiser's signal).
\item Consumer utility can decrease, and advertiser utility can increase (Figure \ref{fig.advertiserutility}, \ref{fig.csandprofit}).
\item More generally, these quantities can behave in complicated ways as a function of the privacy parameter $\epsilon$: they are not necessarily monotone in $\epsilon$, and can even be discontinuous, and equilibrium multiplicity can vary with $\epsilon$ (Figure \ref{fig.equilboundaries}, \ref{fig.equil1and3}, \ref{fig.equil1and2}).
\end{enumerate}

Our work also gives a precise way to derive the functional form of \emph{value of privacy} for players (at least in our simple model), in contrast to a large prior literature that has debated the right way to impose exogenously a functional form on player privacy cost functions \cite{GR11,Xiao13,CCKMV13,NVX14}. In contrast to these assumed privacy cost functions, which increase as privacy guarantees are weakened, we show that players can actually sometimes have a negative marginal cost (i.e. a positive marginal utility) for \emph{weakening} the privacy guarantees of a mechanism.

In summary, we show that even in extremely simple models, privacy exhibits much richer behavior in equilibrium compared to its static counterpart, and that decisions about privacy regulation need to take this into account. Our results serve as a call-to-arms --- that especially when making policy decisions about privacy technologies, \emph{equilibrium effects} need to be analyzed, rather than just static effects, because it is otherwise possible that the introduction of a new privacy technology or regulation can have exactly the opposite effect as was intended.

\subsection{Related Literature}
Differential privacy is a quantitative privacy measure, first introduced by  \cite{DMNS06}, that formally quantifies the effect that the behavior of a single individual can have on a signal computed from his behavior. There is a vast literature on algorithms satisfying differential privacy, and we refer readers to \cite{DR14} for a textbook introduction. Differential privacy has been studied in the context of game theory and mechanism design since \cite{MT07}, who showed that it could be used as a \emph{tool} in mechanism design. Subsequently, a moderately sized literature has emerged studying how differential privacy can be used to design mechanisms that incentivize truthful behavior even in settings in which agents have costs for privacy loss --- see e.g. \cite{Xiao13,GR11,CCKMV13,NOS12,NVX14} for a representative but not exhaustive sample. \cite{BMSS15} study a sequential coordination game inspired by financial markets, and show that when players play un-dominated strategies, the game can have substantially higher social welfare when players are given differentially private signals of each other's actions, as compared to when they are given no signal at all. Here, like in other work, however, privacy is viewed as a constraint on the game (i.e. it is added because of its own merits), and does not increase welfare compared to the full information setting. \cite{GL13} study a simple model of data procurement in which user costs for privacy are a function of the number of other users participating --- and study the ability of a mechanism to procure data in equilibrium. In \cite{GL13}, agents have explicitly encoded values for their privacy loss. In contrast, in our model, agents do not care about privacy except insofar as it affects the payoff-relevant outcomes of the game they are playing --- differential privacy in our setting is instead a parameter defining the game we analyze.

A small literature in economics and marketing has looked to understand the effect of privacy in repeated sales settings. The earliest paper is by \cite{taylor2004consumer} who studies a setting where buyers purchase from firm $1$ in period $1$ and firm $2$ in period $2$. The author shows that counter-intuitively, strategic consumers may prefer that their purchase decision be made public, while strategic sellers may prefer to commit to keep purchase decisions private. \cite{calzolari2006optimality} derive similar results in a general contracting setting.

More recently, \cite{conitzer2012hide} consider a setting where a buyer purchases twice from the same firm, and the firm cannot commit to the future price in period $1$, and may condition on the consumer's purchase decision. They consider the effect of allowing the buyer to purchase privacy, i.e. ``hide'' his first period purchase decision from the seller, and show that in equilibrium, having this ability may make buyers worse off (and the firm better off) than in its absence.

We build on these papers by here modeling privacy as a continuous choice variable in the spirit of differential privacy, rather than the discrete choice (purchase decision revealed or not) considered in previous papers. This allows us to analyze the quantitative effect of privacy on welfare and profit as a continuous quantity, rather than a binary effect, and in particular lets us show for the first time that \emph{increasing} privacy protections (i.e. decreasing the correlation between the advertiser's signal and the buyer's action) can actually increase the information contained in the signal about the buyer's type.

\section{Model and Preliminaries}

We study a two period game with a single consumer and two firms.  The first firm has a single good to sell to the consumer and wishes to maximize its expected profit. The second firm is an advertiser who wishes to show a targeted ad to the consumer.  We will also refer to the first firm as the \emph{seller}, and the second firm as the \emph{advertiser}.

\begin{enumerate}
\item In period 1, the consumer has a privately known value $v \in [0,1]$ for the good, drawn from a distribution with CDF $F$ and density $f$.  The seller posts a take-it-or-leave-it price $p$ for the good, and the consumer makes a purchase decision. We assume that the seller's price is not observed by the advertiser. 

\item In period 2, the consumer can have one of two types, $t_1$ and $t_2$, where the probability of having each type depends on her value $v$ from period 1.  Specifically, $\Pr(t_1) = g(v)$ and $\Pr(t_2) = 1-g(v)$, for a known function $g: [0,1] \to [0,1]$. The advertiser may show the buyer one of two ads, $A$ and $B$.  He gets payoff $s_{1A}$ and $s_{2A}$ respectively from showing ad A to a consumer of type $t_1$ and $t_2$, and payoffs $s_{1B}$ and $s_{2B}$ from showing ad B to a consumer of type $t_1$ and $t_2$ respectively.  The consumer gets additional utility $\delta$ from being shown ad A over B.%
\footnote{Since the customer's preferences in period $2$ are independent of her type, it does not matter whether she knows her period $2$ type from the beginning or learns it at the start of period $2$. }
\end{enumerate}

\begin{assumption} \label{ass.one}
The following two assumptions are made regarding the distribution of buyer's value and type:
\begin{enumerate}
\item The distribution of the buyer's value satisfies the non-decreasing hazard rate assumption, i.e. $\frac{f(v)}{1-F(v)}$ is non-decreasing in $v$.
\item The probability that a buyer of value $v$ is of type $t_1$, $g(v)$, is non-decreasing in $v$.
\item Buyers prefer ad A, i.e.  $\delta>0$.
\end{enumerate}
Regarding the payoffs to the advertiser, we assume that $s_{1A}>s_{1B}$ and $s_{2B}>s_{2A}$.
\end{assumption}

The payoff assumption corresponds to the case where type $t_1$ is the ``high type'' and ad A is the ``better'' ad.  For example, the ad could be for a credit card.  The advertiser can offer either a card with a low interest rate (ad $A$) or high interest rate (ad $B$), and the consumer's purchase history may reveal his creditworthiness.  The former distributional assumption is standard in mechanism design. The latter assumption amounts to saying that high-value buyers are more likely to be the ones the advertiser wants to target with the ``good'' ad.

The advertiser neither observes the consumer's type, nor directly observes his purchase decision. Following the consumer's decision, the advertiser will learn (partial) information about the consumer's action in the first period. We write $b$ to denote the bit encoding the consumer's decision in the first period --- that is, $b=1$ if the consumer purchased the good in period 1, and $b=0$ otherwise.  The advertiser does not learn $b$ exactly, but rather a noisy version $\bhat$ that has been flipped with probability $1-q$, for $q \in [1/2,1]$.  The advertiser observes the noisy bit $\bhat$ and then performs a Bayesian update on his beliefs about the consumer's type, and displays the ad that maximizes his (posterior) expected payoff.  The consumer knows that her period 1 purchase decision will affect the ad she sees in period 2.  She seeks to maximize her total utility over both periods, and thus is not myopic.

The parameter $q$ measures the correlation of the reported bit with the actual purchase decision of the consumer, which can also be quantified via differential privacy:

\begin{definition}[\cite{DMNS06}]
A signal $\bhat : \{0,1\}\rightarrow \{0,1\}$ satisfies $\epsilon$-differential privacy if for every $b \in \{0,1\}$ and for every $b' \in \{0,1\}$:
$$\Pr[\bhat(b) = b'] \leq \exp(\epsilon)\Pr[\bhat(1-b)=b']$$
\end{definition}
Differential privacy quantifies in a strong sense what an observer can learn about the input bit from the output --- see \cite{KS14} for an analysis of the semantics of this guarantee --- and among other things provides an upper bound on the mutual information between the input and output bit.

In our setting, it is easy to translate the parameter $q$ into an $\epsilon$-differential privacy guarantee for the Period 1 purchase decision as follows:
\[\frac{q}{1-q}=e^{\epsilon} \; \; \Longleftrightarrow \; \; q = \frac{e^{\epsilon}}{1+e^{\epsilon}}. \]
In the two extremal cases, full privacy ($q=1/2$) corresponds to $\epsilon$-differential privacy with $\epsilon=0$, and no privacy ($q=1$) corresponds to $\epsilon$-differential privacy with $\epsilon=\infty$.  By varying $q$ from $1/2$ to $1$, we will be able to measure changes to the equilibrium outcomes for all possible privacy levels.

\section{Equilibrium analysis}
To begin, we observe that any equilibrium must be such that the consumer follows a cutoff strategy in Period 1: there exists a marginal consumer with value $v^*$, such that any consumer with value $v \in [0, v^*)$ does not buy the good, and any consumer with value $v \in [v^*, 1]$ does. We formally verify this in the Appendix  (Proposition \ref{prop.thresh}).

\subsection{Period 2}
The advertiser sees $\bhat$, the noisy purchase decision bit, and performs a Bayesian update on its prior over types.  This allows us to define the advertiser's posterior given an observed $\bhat = j$, for $j \in \{0,1\}$.  Recall that $\bhat=b$ with probability $q$, and $\bhat=1-b$ with probability $1-q$.

Plugging in the probabilities for each $j \in \{0,1\}$, we see that the advertiser's posterior when he understands the consumer is following a threshold strategy with cutoff $v^*$ is as follows:
\begin{align*} 
r(\bhat=1, v^*, q)  &= \frac{(1-q)\int_{0}^{v^*} g(v) f(v) dv + q \int_{v^*}^{1} g(v) f(v) dv}{(1-q)F(v^*) +q(1-F(v^*))}, \\
 r(\bhat = 0, v^*, q) &= \frac{q \int_{0}^{v^*} g(v) f(v) dv + (1-q) \int_{v^*}^{1} g(v) f(v) dv}{q F(v^*) + (1-q) (1-F(v^*))}.
\end{align*}
Here $r(\bhat, v^*, q)$ is the advertiser's posterior belief that the consumer is of type $t_1$ after seeing the noisy bit $\bhat$, given that $v^*$ is the marginal consumer in Period 1 and given the noise level $q$.

The advertiser wishes to maximize his expected payoff, so his Bayesian optimal decision rule will be to show ad $A$ if and only if,
\[ s_{1A} r(\bhat, v^*,q) + s_{2A} (1-r(\bhat, v^*,q)) > s_{1B} r(\bhat,v^*,q) + s_{2B} (1-r(\bhat, v^*,q)). \]
That is, he will show ad $A$ if it maximizes his expected payoff.  Rearranging this in terms of his posterior, he will show ad $A$ if and only if,
\[ r(\bhat, v^*,q) > \frac{ s_{2B} - s_{2A}}{s_{1A} - s_{2A} - s_{1B} + s_{2B}} := \eta. \]
We define this fraction to be $\eta$ for shorthand.  Notice that $\eta$ does not depend on any game parameters other than the advertiser's payoff for each outcome, $s_{1A}$, $s_{2A},$ $s_{1B}$, and $s_{2B}$.  Further, by our assumptions on the ranking of these four (Assumption \ref{ass.one}) we have ensured that $\eta \in[0,1]$.

The following lemma will be useful throughout. It says that fixing a cutoff strategy $v^*$, the seller's posterior on seeing a noisy ``purchased'' bit is increasing as the amount of noise decreases (i.e. $q$ increases), and similarly the seller's posterior on seeing a noisy ``did not purchase'' bit is decreasing.

\begin{lemma}\label{prop.changeq}
Fixing $v^*$, $r(1,v^*,q)$ is increasing in $q$ and $r(0, v^*,q)$ is decreasing in $q$.
\end{lemma}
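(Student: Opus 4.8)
The plan is a direct computation. I would introduce the shorthand $P := \int_0^{v^*} g(v)f(v)\,dv$ and $Q := \int_{v^*}^{1} g(v)f(v)\,dv$, and write $F := F(v^*)$, so that
\[
r(1,v^*,q) = \frac{(1-q)P + qQ}{(1-q)F + q(1-F)}, \qquad r(0,v^*,q) = \frac{qP + (1-q)Q}{qF + (1-q)(1-F)}.
\]
Both are ratios of functions that are affine in $q$, so each is either monotone or unimodal on $[1/2,1]$, and the sign of its derivative will turn out to be constant in $q$.

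First I would differentiate $r(1,v^*,q)$ in $q$ via the quotient rule. Writing the numerator as $P + q(Q-P)$ and the denominator as $F + q(1-2F)$, the sign of $\partial r(1,v^*,q)/\partial q$ equals the sign of $(Q-P)\bigl(F + q(1-2F)\bigr) - \bigl(P + q(Q-P)\bigr)(1-2F)$. Expanding, the terms that are linear in $q$ cancel (this is the one spot to keep track of carefully), and the expression collapses to $QF - P(1-F)$, independent of $q$. Hence $r(1,v^*,\cdot)$ is monotone on $[1/2,1]$, and it is non-decreasing precisely when $QF \ge P(1-F)$.

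The remaining step is to establish $QF \ge P(1-F)$. When $0 < F < 1$ this is equivalent to $\tfrac{P}{F} \le \tfrac{Q}{1-F}$; but $\tfrac{P}{F} = \tfrac{\int_0^{v^*} g f}{\int_0^{v^*} f}$ is the $f$-weighted average of $g$ on $[0,v^*]$, while $\tfrac{Q}{1-F} = \tfrac{\int_{v^*}^{1} g f}{\int_{v^*}^{1} f}$ is the $f$-weighted average of $g$ on $[v^*,1]$. Since $g$ is non-decreasing (Assumption~\ref{ass.one}), every value of $g$ below the cutoff is at most $g(v^*)$, which is at most every value of $g$ above the cutoff, so the first average is at most the second. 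The boundary cases $F \in \{0,1\}$ are immediate: then one of $P, Q$ and the corresponding one of $F, 1-F$ vanish, and $r(1,v^*,q)$ is constant in $q$.

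Finally, for $r(0,v^*,q)$ I would use the symmetry $r(0,v^*,q) = r(1,v^*,1-q)$, which one reads off by substituting $q \mapsto 1-q$ in the formula for $r(1,\cdot)$. Since $r(1,v^*,\cdot)$ is non-decreasing and $q \mapsto 1-q$ is decreasing, $r(0,v^*,q)$ is non-increasing in $q$. The only genuine obstacle is the bookkeeping in the derivative computation (and deciding whether to claim strict monotonicity, which holds unless $g$ is $f$-a.e.\ constant across the threshold); the substantive ingredient is the comparison of the average of $g$ below versus above the cutoff, which is exactly the monotone-$g$ half of Assumption~\ref{ass.one}.
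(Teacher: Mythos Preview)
Your proof is correct and essentially the same as the paper's. The paper writes $r(1,v^*,q)$ as a convex combination of $\alpha_1 = P/F$ and $\alpha_2 = Q/(1-F)$ and argues the weight on the larger term $\alpha_2$ increases in $q$; your direct quotient-rule computation collapses to the same inequality $QF \ge P(1-F)$, i.e.\ $\alpha_1 \le \alpha_2$, and both proofs finish with the symmetry $r(0,v^*,q)=r(1,v^*,1-q)$.
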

\begin{proof}
Define $\alpha_1 = \frac{\int_0^{v^*} g(v) f(v) dv}{F(v^*)}$, $\alpha_2 = \frac{\int_{v^*}^1 g(v) f(v) dv }{1-F(v^*)}$. Note that since $g(\cdot)$ is non-decreasing, $\alpha_1 \leq \alpha_2$. Therefore, $r(1, v^*, q)$ can be written as:
\begin{align*}
r(1,v^*,q) = \frac{(1-q)F(v^*) \alpha_1 + q (1-F(v^*)) \alpha_2}{(1-q)F(v^*) + q(1-F(v^*))}.
\end{align*}
This is the convex combination of $\alpha_1$ and $\alpha_2$ with weights $\frac{(1-q)F(v^*)}{(1-q)F(v^*) + q(1-F(v^*))}$ and $\frac{q(1-F(v^*)) }{(1-q)F(v^*) + q(1-F(v^*))}$ respectively. Next, note that for $q \in [1/2,1]$ the weight on $\alpha_2$ is increasing in $q$, and the weight on $\alpha_1$ correspondingly decreasing. To see this, differentiate the weight with respect to $q$ and observe that it is always positive. Therefore $r(1, v^*, q)$ is increasing in $q.$

Finally, note that $r(0,v^*, q) = r(1, v^*, 1-q)$, so the latter claim follows.
\end{proof}

The following proposition is an important property of the advertiser's posterior, i.e. that in any equilibrium, seeing a noisy ``purchased'' bit always results in a higher assessment of type $t_1$ than a noisy non-purchased bit.

\begin{lemma}\label{prop.morelikely}
For any period 1 cutoff value $v^*$ and any noise level $q$, the advertiser's posterior probability of the consumer having type $t_1$ given noisy bit $\bhat=1$ is higher than his posterior belief of type $t_1$ given noisy bit $\bhat=0$.  Formally, for all $v^*$, $q$, it holds that $r(1,v^*,q) \geq r(0, v^*,q)$.
\end{lemma}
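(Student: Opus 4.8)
The plan is to reuse the convex-combination representation of the advertiser's posterior from the proof of Lemma~\ref{prop.changeq}. Recall $\alpha_1 = \frac{\int_0^{v^*} g(v)f(v)\,dv}{F(v^*)}$ and $\alpha_2 = \frac{\int_{v^*}^1 g(v)f(v)\,dv}{1-F(v^*)}$ are the $f$-weighted averages of $g$ below and above the cutoff, and since $g$ is non-decreasing, $\alpha_1 \le \alpha_2$. Both $r(1,v^*,q)$ and $r(0,v^*,q)$ are convex combinations of $\alpha_1$ and $\alpha_2$; writing $a := F(v^*)$ and $b := 1-F(v^*)$, the weight placed on $\alpha_2$ is $w_1 := \frac{qb}{(1-q)a+qb}$ in the case $\bhat=1$ and $w_0 := \frac{(1-q)b}{qa+(1-q)b}$ in the case $\bhat=0$. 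Because $\alpha_1 \le \alpha_2$, it suffices to show $w_1 \ge w_0$, i.e. that a noisy ``purchased'' bit shifts the posterior weight toward the higher conditional average $\alpha_2$.

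To establish $w_1 \ge w_0$ I would clear the (positive) denominators and cancel the common factor $b$; the difference $w_1 - w_0$ then reduces to a positive multiple of $(q^2 - (1-q)^2)\,a = (2q-1)F(v^*)$, which is non-negative exactly because $q \ge 1/2$. This is the only place the restriction $q \in [1/2,1]$ enters, and it is precisely the content of the lemma. The degenerate cases $F(v^*) = 0$ and $F(v^*) = 1$ should be dispatched separately: there $r(1,v^*,q)$ and $r(0,v^*,q)$ both collapse to the prior $\int_0^1 g(v)f(v)\,dv$, so the inequality holds with equality.

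An even shorter route, if preferred, is to invoke the identity $r(0,v^*,q) = r(1,v^*,1-q)$ recorded at the end of the proof of Lemma~\ref{prop.changeq}, together with the monotonicity of $r(1,v^*,\cdot)$: the weight $w_1$ has non-negative derivative in $q$ on all of $[0,1]$ (the derivative computation in that proof in fact gives $\frac{ab}{((1-q)a+qb)^2}\ge 0$ without using $q\ge 1/2$), so $r(1,v^*,\cdot)$ is non-decreasing on $[0,1]$, and since $q \ge 1/2 \ge 1-q$ we conclude $r(1,v^*,q) \ge r(1,v^*,1-q) = r(0,v^*,q)$. There is no real obstacle in either approach; the only points requiring care are handling the boundary cases $F(v^*)\in\{0,1\}$ and, in the second route, being explicit that the monotonicity of the weight is valid on the full range $[0,1]$ and not merely on $[1/2,1]$.
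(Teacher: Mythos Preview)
Your proposal is correct, and your second route is precisely the paper's own argument: the paper simply observes that the claim follows from Lemma~\ref{prop.changeq}, the identity $r(0,v^*,q)=r(1,v^*,1-q)$, and $q\ge 1/2$. Your care in noting that the monotonicity of the weight $w_1$ in $q$ holds on all of $[0,1]$ (not merely $[1/2,1]$) is well placed, since the argument needs to compare $r(1,v^*,\cdot)$ at $q$ and at $1-q\le 1/2$; the paper's proof of Lemma~\ref{prop.changeq} implicitly covers this because the derivative is nonnegative everywhere. Your first route (directly comparing $w_1$ and $w_0$) is a slight unpacking of the same idea and is also fine.
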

\begin{proof}
The proposition follows from Lemma \ref{prop.changeq}, the fact that $r(0,v^*, q) = r(1, v^*, 1-q)$, and $q \geq 0.5$.
\end{proof}

In light of this, there are only three different strategies that the advertiser could use in equilibrium:

\begin{enumerate}
\item Show ad $A$ to a consumer with noisy bit $\bhat=1$ and ad $B$ to consumer with noisy bit $\bhat=0$.  This is characterized by the following inequalities:
\begin{equation}\label{eq.one} r(1, v^*, q) > \eta \; \; \; \mbox{ and } \;\;\; r(0, v^*,q) < \eta \end{equation}
\item Always shows ad $A$, regardless of the observed noisy bit $\bhat$. This is optimal for the advertiser when the parameters are such that:
\begin{equation}\label{eq.two} r(1, v^*,q) > \eta \; \; \; \mbox{ and } \; \; \; r(0, v^*,q) > \eta \end{equation}
\item Always shows ad $B$, regardless of the observed noisy bit. This is optimal for the advertiser when the parameters are such that:
\begin{equation}\label{eq.three} r(1, v^*,q) < \eta \; \; \; \mbox{ and } \; \; \; r(0, v^*,q) < \eta \end{equation}
\end{enumerate}

In the latter two cases, consumers will behave myopically in the first round because their purchase decision doesn't affect their payoff in the next round.  The seller can then maximize period 1 profits by posting the monopoly price for the distribution $F$.  Thus cases 2 and 3 can only occur when the posterior induced by the monopoly price satisfies \eqref{eq.two} or \eqref{eq.three}.

We call the equilibrium when the advertiser follows the first strategy a \emph{discriminatory advertising equilibrium}. The latter two are referred to as \emph{uniform advertising equilibria} $A$ and $B$ respectively

Define the myopic monopoly price as $p_M,$ i.e. $p_M$ solves:
\begin{align}
p_M - \frac{1-F(p_M)}{f(p_M)} = 0. \label{eqn.pricemono}
\end{align}
The following proposition discusses existence and properties of uniform advertising equilibria.
\begin{proposition}\label{prop.case2and3}
Fixing other parameters of the game:
\begin{enumerate}
\item For $q = \frac12$ there is either a uniform advertising equilibrium $A$ or $B$, but never both.
\item In the former case: uniform advertising equilibria $A$ exist for all $q \in [\frac12, \bar{q}_2]$, where $\bar{q}_2$ is the largest solution to $r(0, p_M, q) = \eta$ in $[\frac12, 1]$, if any. Further, there are no uniform advertising equilibria $B$ for any $q$.
\item In the latter, conversely, uniform advertising equilibria $B$ exist for all $q \in [\frac12, \bar{q}_3]$, where $\bar{q}_3$ is the largest solution to $r(1,p_M, q) =\eta$ in $[\frac12, 1]$, if any. Further there are no uniform advertising equilibria $A$ for any $q$.
\end{enumerate}
\end{proposition}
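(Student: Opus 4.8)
\section*{Proof proposal for Proposition~\ref{prop.case2and3}}

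The plan is to reduce everything to statements about the two numbers $r(1, p_M, q)$ and $r(0, p_M, q)$ and then read off the claims from Lemmas~\ref{prop.changeq} and~\ref{prop.morelikely}. The first step is the observation already made in the text: in a uniform advertising equilibrium the consumer is myopic, so the seller posts the monopoly price $p_M$ (which exists and is unique by the non-decreasing hazard rate part of Assumption~\ref{ass.one}), and the Period~1 cutoff is $v^* = p_M$; conversely, whenever the posterior induced by $p_M$ satisfies \eqref{eq.two} (resp.\ \eqref{eq.three}) we do have a uniform $A$ (resp.\ $B$) equilibrium. So a uniform $A$ equilibrium exists at noise level $q$ iff $r(1,p_M,q) > \eta$ and $r(0,p_M,q) > \eta$, and a uniform $B$ equilibrium exists iff $r(1,p_M,q) < \eta$ and $r(0,p_M,q) < \eta$. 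Since Lemma~\ref{prop.morelikely} gives $r(1,p_M,q) \ge r(0,p_M,q)$ for every $q$, the first condition is equivalent to just $r(0,p_M,q) > \eta$ and the second to just $r(1,p_M,q) < \eta$; in particular the two can never hold at the same $q$, which already yields the ``never both'' assertion of item~1 as well as the cross non-existence claims in items~2 and~3 once we fix the sign pattern.

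Next I would evaluate at $q = 1/2$. There $\bhat$ is independent of $b$, so the advertiser's posterior equals the prior: $r(1,p_M,1/2) = r(0,p_M,1/2) = \mu$, where $\mu := \int_0^1 g(v) f(v)\,dv$ is the unconditional probability of type $t_1$. Hence if $\mu > \eta$ then \eqref{eq.two} holds and \eqref{eq.three} fails (a uniform $A$ equilibrium, no uniform $B$); if $\mu < \eta$ the reverse; and $\mu = \eta$ is a non-generic knife-edge at which, under the strict inequalities of \eqref{eq.two}--\eqref{eq.three}, neither exists. In every case at most one of the two uniform equilibria exists at $q=1/2$, completing item~1, and the dichotomy $\mu > \eta$ versus $\mu < \eta$ is exactly the ``former case''/``latter case'' split of the statement.

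For item~2 assume $\mu > \eta$. By Lemma~\ref{prop.changeq}, $r(1,p_M,\cdot)$ is increasing, so $r(1,p_M,q) \ge \mu > \eta$ for all $q \in [1/2,1]$: the ``$r(1)>\eta$'' half of \eqref{eq.two} is automatic, and $r(1,p_M,q)<\eta$ never holds, so there is no uniform $B$ equilibrium for any $q$. Again by Lemma~\ref{prop.changeq}, $r(0,p_M,\cdot)$ is decreasing and equals $\mu > \eta$ at $q=1/2$, so the set $\{q \in [1/2,1] : r(0,p_M,q) > \eta\}$ is an initial segment $[1/2,\bar{q}_2)$, where $\bar{q}_2$ is the (by monotonicity unique, hence largest) solution of $r(0,p_M,q)=\eta$ in $[1/2,1]$ if one exists and $\bar{q}_2 = 1$ otherwise. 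Combining the last two sentences gives exactly the claimed existence region for the uniform $A$ equilibrium. Item~3 is the mirror image: when $\mu < \eta$ we have $r(0,p_M,q) \le \mu < \eta$ for all $q$, so \eqref{eq.two} never holds (no uniform $A$), while $r(1,p_M,\cdot)$ increases from $\mu < \eta$ and lies below $\eta$ precisely on $[1/2,\bar{q}_3)$ with $\bar{q}_3$ the largest root of $r(1,p_M,q)=\eta$ in $[1/2,1]$ (or $\bar{q}_3 = 1$ if none), which is the asserted region for the uniform $B$ equilibrium.

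This argument is essentially a bookkeeping exercise on top of the two lemmas, so there is no deep obstacle; the care-requiring points are (i) the boundary $q=\bar{q}_2$ (resp.\ $\bar{q}_3$), where the advertiser is exactly indifferent and \eqref{eq.two} (resp.\ \eqref{eq.three}) holds with equality rather than strictly --- this is why the interval is written closed, and one should either allow tie-breaking at indifference or simply note that this endpoint is harmless; and (ii) degenerate parameters ($F(p_M) \in \{0,1\}$, or $g$ constant on $[0,p_M]$ or on $[p_M,1]$) for which $r(1,p_M,\cdot)$ or $r(0,p_M,\cdot)$ is constant in $q$, in which case the relevant equation has no solution in $[1/2,1]$ and the statement is to be read with $\bar{q}_2 = \bar{q}_3 = 1$. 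Everything else follows directly from Lemmas~\ref{prop.changeq} and~\ref{prop.morelikely} and the reduction of the two-sided conditions \eqref{eq.two}--\eqref{eq.three} to one-sided ones.
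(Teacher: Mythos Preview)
Your proposal is correct and follows essentially the same approach as the paper's proof: both rest on the observation that at $q=1/2$ the posteriors coincide with the prior, and then use the monotonicity of $r(1,p_M,\cdot)$ and $r(0,p_M,\cdot)$ from Lemma~\ref{prop.changeq} to conclude the existence regions are initial intervals of $[1/2,1]$. Your write-up is more detailed---explicitly reducing the two-sided conditions \eqref{eq.two}--\eqref{eq.three} to one-sided ones via Lemma~\ref{prop.morelikely}, and flagging the endpoint and degenerate cases---but the underlying argument is identical.
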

\begin{proof}
At $q=1/2$, the signal $\bhat$ is complete noise, and the advertiser's posterior on types will be exactly his prior.  This means that the advertiser must show the same ad to all consumers, which corresponds exactly to a uniform advertising equilibrium A or B, depending on whether his prior probability of type $t_1$ is larger or smaller than $\eta$. The corner case where the prior exactly equals $\eta$ is ignored.

Note that $r(1, p_M, \frac12) = r(0,p_M, \frac12)$, and both are either $> \eta$ or $< \eta$. Further, by Lemma \ref{prop.changeq},$r(1, p_M,q)$ is increasing in $q$, while the $r(0, p_M, q)$ is decreasing in $q$.

Therefore the system of equations \eqref{eq.two} or \eqref{eq.three} can only be satisfied on some interval $[\frac12, \bar{q}_j]$ if at all for $v^* = p_M$.
\end{proof}

To collect everything we have shown so far. There are 3 kinds of possible equilibria in this game:
\begin{enumerate}
\item \emph{Discriminatory Equilibrium}: Advertiser shows ad A on seeing a noisy purchase bit and ad B and seeing a noisy non-purchase bit. The cutoffs followed by the consumer, $v^*$ is such that \eqref{eq.one} is satisfied.
\item \emph{Uniform Advertising Equilibrium A}: Advertiser always shows ad A, regardless of bit. In this case buyer purchases myopically, and seller charges the myopic monopoly price $p_M$. Further, $r(\cdot)$ evaluated at $v^*$ equaling the myopic monopoly price $p_M$ satisfies \eqref{eq.two}. This equilibrium exists for all $q$ on some interval $[\frac12, \bar{q}_2],$ if at all.
\item \emph{Uniform Advertising Equilibrium B}: Advertiser always shows ad B, regardless of bit. In this case buyer purchases myopically, and seller charges the myopic monopoly price $p_M$. Further, $r(\cdot)$ evaluated at $v^*$ equaling the myopic monopoly price $p_M$ satisfies \eqref{eq.three}. This equilibrium exists for all $q$ on some interval $[\frac12, \bar{q}_3],$ if at all.
 \end{enumerate}
By observation, the two types of uniform advertising equilibria cannot coexist in the same game.

In a uniform advertising equilibrium, the period 1 behavior is straightforward.  We now finish the analysis by characterizing period 1 behavior under discriminatory advertising equilibria.

\subsection{Period 1 Behavior in Discriminatory Advertising Equilibria}
In this kind of equilibrium the consumer is aware that her period 1 purchasing decisions will affect the ad she sees in period 2.  She will buy in period 1 if and only if her surplus from purchasing at price $p$ plus her continuation payoff from having purchased (i.e.  expected utility from the ad that will be shown) is greater than the continuation payoff from not having purchased.   Formally, a consumer with value $v$ will purchase in period 1 if the following holds:
\begin{align} \label{eqn:purchase}
(v - p) + q \delta & \geq (1-q) \delta.
\end{align}
For the marginal consumer with value $v^*$, this inequality must hold with equality.
\[ v^* = p + (1-2q)\delta. \]

Define $p_1(q)$ to be the seller's optimal price charged at noise level $q$. Further define $v^*(q)$ as the implied cutoff type at noise level $q$, i.e. $v^*(q) = p_1(q) + (1-2q) \delta.$ Note that both $p_1(q)$ and $v^*(q)$ are continuous functions of $q$.

\begin{lemma}\label{prop.deriv}
Assuming a discriminatory advertising equilibrium exists for a neighborhood of $q \in [\frac12,1]$, the optimal price $p_1(q)$ is increasing in $q$ while the cutoff type $v^*(q)$ is decreasing.
\end{lemma}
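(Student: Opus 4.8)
The plan is to collapse the seller's pricing problem to a one-variable optimization and then read off both comparative statics from the monotone hazard rate assumption via the standard virtual-value substitution.

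First I would set up the seller's objective. Fixing $q$ and staying in the discriminatory regime, \eqref{eqn:purchase} says a consumer of value $v$ buys at price $p$ exactly when $v \ge p + (1-2q)\delta$, so, writing $c := (1-2q)\delta \le 0$, the seller's period-1 profit is $\Pi(p,q) = p\bigl(1 - F(p+c)\bigr)$ (the seller cares only about period-1 revenue, as the advertiser is a separate firm). I would then change variables to the induced cutoff $v^* = p+c$, so the seller equivalently maximizes $(v^*-c)\bigl(1-F(v^*)\bigr)$. A short single-peakedness check: the derivative in $v^*$ equals $(1-F(v^*)) - (v^*-c)f(v^*) = f(v^*)\bigl[\tfrac{1-F(v^*)}{f(v^*)} - (v^*-c)\bigr]$, and since $\tfrac{1-F}{f}$ is non-increasing (Assumption~\ref{ass.one}) while $v^*-c$ is strictly increasing, the bracket is strictly decreasing, so the derivative changes sign at most once, from $+$ to $-$. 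Hence the optimal cutoff is unique and characterized by the first-order condition, which rearranges to
\[
\phi\bigl(v^*(q)\bigr) = c = (1-2q)\delta, \qquad \text{where } \phi(v) := v - \frac{1-F(v)}{f(v)}.
\]
(This also recovers $v^*(\tfrac12) = p_M$ via \eqref{eqn.pricemono}, a useful sanity check, and that $p=0$ is never optimal since $\Pi(0,q)=0$, giving interiority.)

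Next I would extract the comparative statics. Monotone hazard rate makes $\phi$ the sum of the strictly increasing identity and a non-decreasing function, hence strictly increasing and invertible, so $v^*(q) = \phi^{-1}\bigl((1-2q)\delta\bigr)$; since $\delta>0$ the argument is strictly decreasing in $q$ and $\phi^{-1}$ strictly increasing, so $v^*(q)$ is decreasing in $q$. For the price, subtracting the first-order condition from $v^*(q)$ yields the clean identity $p_1(q) = v^*(q) - c = \tfrac{1-F(v^*(q))}{f(v^*(q))}$, i.e. the inverse hazard rate evaluated at $v^*(q)$ — a non-increasing function composed with the decreasing map $q \mapsto v^*(q)$ — so $p_1(q)$ is non-decreasing in $q$ (strictly increasing wherever the hazard rate is strictly increasing).

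The main thing to be careful about is the step asserting that $p_1(q)$ is pinned down by the interior first-order condition, i.e. that within the discriminatory regime the seller's problem is well-behaved (single-peaked with an interior optimum) so that $p_1(q)$ and $v^*(q)$ are the continuous functions claimed in the text. One should also keep in mind the boundary behavior at $q=\tfrac12$ and where the discriminatory regime abuts a uniform-advertising regime, but the hypothesis of the lemma — a discriminatory equilibrium on a neighborhood of $q$ — is exactly what lets us remain in the interior throughout, after which the conclusion is a short consequence of the monotonicity of the virtual value.
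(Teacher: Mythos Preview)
Your argument is correct and reaches the same conclusion, but the route differs from the paper's. The paper writes the first-order condition in the price, $p_1(q) = I\bigl(p_1(q)+(1-2q)\delta\bigr)$ with $I(v)=\tfrac{1-F(v)}{f(v)}$, and then applies the implicit function theorem: differentiating in $q$ gives $p_1'(q)=I'\cdot(p_1'(q)-2\delta)$, from which $I'\le 0$ forces $p_1'(q)\ge 0$ and $p_1'(q)-2\delta<0$, i.e.\ $v^{*\prime}(q)<0$. Your version instead reparametrizes the seller's problem in the cutoff, obtains the first-order condition in the clean form $\phi(v^*)=(1-2q)\delta$ with $\phi(v)=v-I(v)$, and then reads off the comparative statics directly from the monotonicity of $\phi$ (for $v^*$) and from the identity $p_1(q)=I(v^*(q))$ together with $I$ non-increasing (for $p_1$). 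This is a genuinely more elementary path: it avoids the implicit function theorem entirely, and along the way you verify single-peakedness and hence uniqueness of the optimal cutoff, which the paper's proof uses implicitly. You are also slightly more careful than the paper in flagging that $p_1$ is only \emph{non}-decreasing when the hazard rate is merely weakly increasing; the paper's ``$p_1'$ positive'' really yields $p_1'\ge 0$ in that borderline case as well.
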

\begin{proof}
From equation \eqref{eqn:purchase} above, if the seller charges a price of $p$ in a discriminatory advertising equilibrium, then the buyer purchases if her value exceeds $p + (1-2q) \delta$. Therefore the seller chooses $p$ to maximize his net profit,
\begin{align*}
p (1-F(p + (1-2q) \delta)).
\end{align*}
Differentiating with respect to $p$, the optimal price ($p_1(q)$) in a discriminatory advertising equilibrium solves:
\begin{equation}\label{eqn:foc}
p_1(q) - I(p_1(q) + (1-2q)\delta) =0,
\end{equation}
where $I(v) = \frac{1-F(v)}{f(v)}.$ Applying the implicit function theorem, we see that
\begin{equation*}
p_1'(q) - I'(p_1 + (1-2q)\delta) (p_1'(q) - 2 \delta) =0.
\end{equation*}
Therefore, since $I'$ is negative, it follows that $p_1'(\cdot)$ must be positive. Further, we have that $p_1'(q) - 2\delta$ is negative, i.e. the cutoff value who buys at the optimal price charged is decreasing in $q$.
\end{proof}

The following proposition says that whenever a discriminatory advertising equilibrium exists at a given $q$, the price is higher and more customers buy than in either uniform advertising equilibrium.
\begin{lemma}\label{prop.ordering}
For any $q \in [1/2, 1]$, the period 1 price in a discriminatory advertising equilibrium (if it exists), is higher than the monopoly price (i.e. the price charged in a uniform advertising equilibrium), which is higher than the purchase cutoff employed by a consumer in a discriminatory advertising equilibrium. Formally, $$v^*(q) \leq p_M \leq p_1(q).$$
\end{lemma}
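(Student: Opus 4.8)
The plan is to reduce everything to the first-order conditions defining the three prices, together with monotonicity of the inverse-hazard function $I(v)=\frac{1-F(v)}{f(v)}$. Assumption~\ref{ass.one}(1) (non-decreasing hazard rate) is exactly the statement that $I$ is non-increasing, so the map $\Phi(x):=x-I(x)$ is \emph{strictly} increasing (the identity term alone gives strict monotonicity, regardless of the sign of $I'$). By \eqref{eqn.pricemono}, the monopoly price $p_M$ is the unique zero of $\Phi$, i.e.\ $\Phi(p_M)=0$. Hence it suffices to establish the two inequalities $\Phi(v^*(q))\le 0$ and $\Phi(p_1(q))\ge 0$ and then invoke strict monotonicity of $\Phi$ to conclude $v^*(q)\le p_M\le p_1(q)$.

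For the left inequality I would combine the marginal-consumer identity $v^*(q)=p_1(q)+(1-2q)\delta$ (from \eqref{eqn:purchase}, which also defines $v^*(q)$) with the seller's first-order condition \eqref{eqn:foc}, namely $p_1(q)=I\bigl(p_1(q)+(1-2q)\delta\bigr)=I(v^*(q))$. Substituting gives $\Phi(v^*(q))=v^*(q)-I(v^*(q))=v^*(q)-p_1(q)=(1-2q)\delta$, which is $\le 0$ since $q\in[1/2,1]$ and $\delta>0$ by Assumption~\ref{ass.one}(3); thus $v^*(q)\le p_M$. For the right inequality, the same two facts give $p_1(q)=I(v^*(q))$ with $v^*(q)=p_1(q)+(1-2q)\delta\le p_1(q)$, so that (since $I$ is non-increasing) $p_1(q)=I(v^*(q))\ge I(p_1(q))$, i.e.\ $\Phi(p_1(q))=p_1(q)-I(p_1(q))\ge 0=\Phi(p_M)$, whence $p_1(q)\ge p_M$.

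I do not expect a genuine obstacle here; the argument is essentially algebraic once one has Lemma~\ref{prop.deriv} and its proof in hand. The only points requiring care are (i) using strict monotonicity of $x\mapsto x-I(x)$ rather than merely ``$I'<0$'', so that the sign comparisons on $\Phi$ actually transfer to the arguments, and (ii) keeping straight that $p_M$, $p_1(q)$, and $v^*(q)$ are well-defined under the standing hypothesis that a discriminatory advertising equilibrium exists at $q$ (so that \eqref{eqn:foc} holds). It is worth noting the economic intuition the chain encodes: relative to the myopic benchmark, the continuation premium $\delta$ from having purchased makes consumers more willing to buy, so the seller optimally raises the price ($p_1(q)\ge p_M$), while the effective value cutoff at which they actually buy falls ($v^*(q)\le p_M$).
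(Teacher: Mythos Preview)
Your argument is correct, and it is a genuinely different route from the paper's. The paper proves the ordering by first noting that at $q=\tfrac12$ one has $p_1(\tfrac12)=v^*(\tfrac12)=p_M$ (all three first-order conditions coincide there), and then invoking Lemma~\ref{prop.deriv} to get $p_1'(q)>0$ and $v^{*\prime}(q)<0$, so the three quantities spread apart monotonically as $q$ increases. Your approach instead works \emph{pointwise} at each fixed $q$: you package the monopoly FOC as $\Phi(p_M)=0$ with $\Phi(x)=x-I(x)$ strictly increasing, rewrite the discriminatory FOC as $p_1(q)=I(v^*(q))$, and then read off $\Phi(v^*(q))=(1-2q)\delta\le 0$ and $\Phi(p_1(q))\ge 0$ directly. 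This is cleaner in the sense that it avoids the implicit-function-theorem step and the integration from the boundary $q=\tfrac12$; it also makes transparent that the result really only needs the FOCs and the non-increasing inverse hazard, not differentiability in $q$ or continuity of the equilibrium along the whole interval. The paper's version, on the other hand, is shorter once Lemma~\ref{prop.deriv} is in hand and conveys the useful dynamic picture that the gap between $p_1(q)$ and $v^*(q)$ widens monotonically in $q$.
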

\begin{proof}
Note that at $q = \frac12$, $p_1(q) = v^*(q) = p_M.$
The result now follows since $p_1'(\cdot)$ is positive, while $v^{*\prime}(\cdot)$ is negative.
\end{proof}

Finally, to resolve existence, which follows easily from the definitions.
\begin{observation}\label{obs.existence}
A discriminatory advertising equilibrium exists at all $q$ such that $r(1, v^*(q), q) \geq \eta$ and $r(0, v^*(q), q) \leq \eta$.
\end{observation}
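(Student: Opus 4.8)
The plan is to exhibit an explicit strategy profile and check that it is a mutual best response, invoking the three single-agent optimality analyses already done in Section 3. Fix a $q$ with $r(1, v^*(q), q) \geq \eta$ and $r(0, v^*(q), q) \leq \eta$, and consider the profile in which the seller posts $p_1(q)$; the consumer purchases in period 1 exactly when $v \geq v^*(q) = p_1(q) + (1-2q)\delta$; and the advertiser shows ad $A$ upon observing $\bhat = 1$ and ad $B$ upon observing $\bhat = 0$, breaking ties toward this rule in the boundary case that a posterior equals $\eta$. I would then verify incentive-compatibility for each of the three agents in turn.

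For the advertiser: since the consumer plays the cutoff $v^*(q)$, the advertiser's posterior after $\bhat = j$ is exactly $r(j, v^*(q), q)$ by the Bayesian computation in Section 3.1, and by the decision rule derived there he weakly prefers $A$ whenever $r(j,\cdot,\cdot) \geq \eta$ and weakly prefers $B$ whenever $r(j,\cdot,\cdot) \leq \eta$; the two hypotheses of the observation are precisely what make ``$A$ on $1$, $B$ on $0$'' optimal. For the consumer: holding the advertiser's rule fixed, purchasing causes $\bhat = 1$ (hence ad $A$) with probability $q$ and not purchasing causes ad $A$ with probability $1-q$, so the net gain from purchasing at price $p$ is $(v-p) + q\delta - (1-q)\delta = v - \bigl(p + (1-2q)\delta\bigr)$, exactly as in \eqref{eqn:purchase}; at $p = p_1(q)$ this is nonnegative iff $v \geq v^*(q)$, so the stated cutoff is a best response. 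For the seller: because the price is not observed by the advertiser, a price deviation leaves the advertiser's strategy (and hence the form of the consumer's cutoff response) unchanged, so the consumer responds to an arbitrary price $p$ with cutoff $p + (1-2q)\delta$ and the seller's profit is $p\,\bigl(1 - F(p + (1-2q)\delta)\bigr)$; by the non-decreasing hazard rate part of Assumption \ref{ass.one} this is maximized at the solution $p_1(q)$ of the first-order condition \eqref{eqn:foc}, exactly as used in the proof of Lemma \ref{prop.deriv}.

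Having checked all three agents, I would conclude that the profile is a discriminatory advertising equilibrium in the sense defined: the advertiser discriminates on the noisy bit, the consumer and seller best-respond, and the induced cutoff $v^*(q)$ satisfies the (weak form of the) conditions \eqref{eq.one}. I do not expect a genuine obstacle here — the statement is flagged as an observation for exactly this reason — but the two points that warrant a sentence rather than silent verification are: (i) the hypothesis uses weak inequalities whereas \eqref{eq.one} uses strict ones, which is absorbed by the tie-breaking convention and changes no payoffs; and (ii) the analysis of seller deviations relies essentially on the modeling assumption that the advertiser does not observe the price, so that off the equilibrium path the advertiser still plays ``$A$ on $1$, $B$ on $0$.''
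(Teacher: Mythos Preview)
Your proposal is correct and is exactly the verification the paper has in mind when it writes that existence ``follows easily from the definitions'' --- the paper gives no proof beyond that phrase, and your three-agent best-response check, together with the two caveats you flag (weak vs.\ strict inequalities handled by indifference, and the advertiser not observing the price so that seller deviations do not perturb period-2 play), is precisely the content of that remark spelled out.
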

Next, note that $v^*(q)$ is a continuous function of $q$. Therefore, equilibria of type $1$ may exist for possibly multiple disjoint intervals in $(\frac12,1]$.

\section{Illustrations via an Example}\label{s.example}

In this section we highlight some of the counter-intuitive effects that result from changing the level of differential privacy constraining the advertiser's signal, by means of an explicit family of simple examples. The phenomena we highlight are quite general, and are generally not brittle to the choice of specific parameters in the game. For simplicity of exposition, we highlight each of these phenomena in the simplest example in which they arise. For the remainder of this section, we take the distribution of buyer values, $F$, to be the uniform distribution on $[0,1]$. We also set the buyer's probability of having type $t_1$ to be exactly his value ---  i.e. we take  $g(v)=v$ for all $v \in [0,1]$.  Finally, we set the additional utility that a buyer gets from being shown ad A to be $\delta = 1$.  The value of parameter $\eta$ (along with its defining parameters $s_{1A}$, $s_{1B}$, $s_{2A}$, and $s_{2B}$) will vary by example, and will be specified when relevant.

The static monopoly price in this game is $p_M = 1/2$, and the price and cutoff value in a discriminatory equilibrium (if it exists at a given $q$) are
\[ p_1(q) = \frac{1}{2} + (2q-1)\frac{\delta}{2} \; \; \mbox{ and } \; \; v^*(q) = \frac{1}{2} - (2q-1)\frac{\delta}{2}. \]

Below we plot these values as a function of $q$.  Note that in this particular example, the discriminatory price and cutoff value are linear in $q$ (because values are distributed uniformly), although this need not be the case in general.

\begin{figure}[h!]
\centering
\includegraphics[scale=0.4]{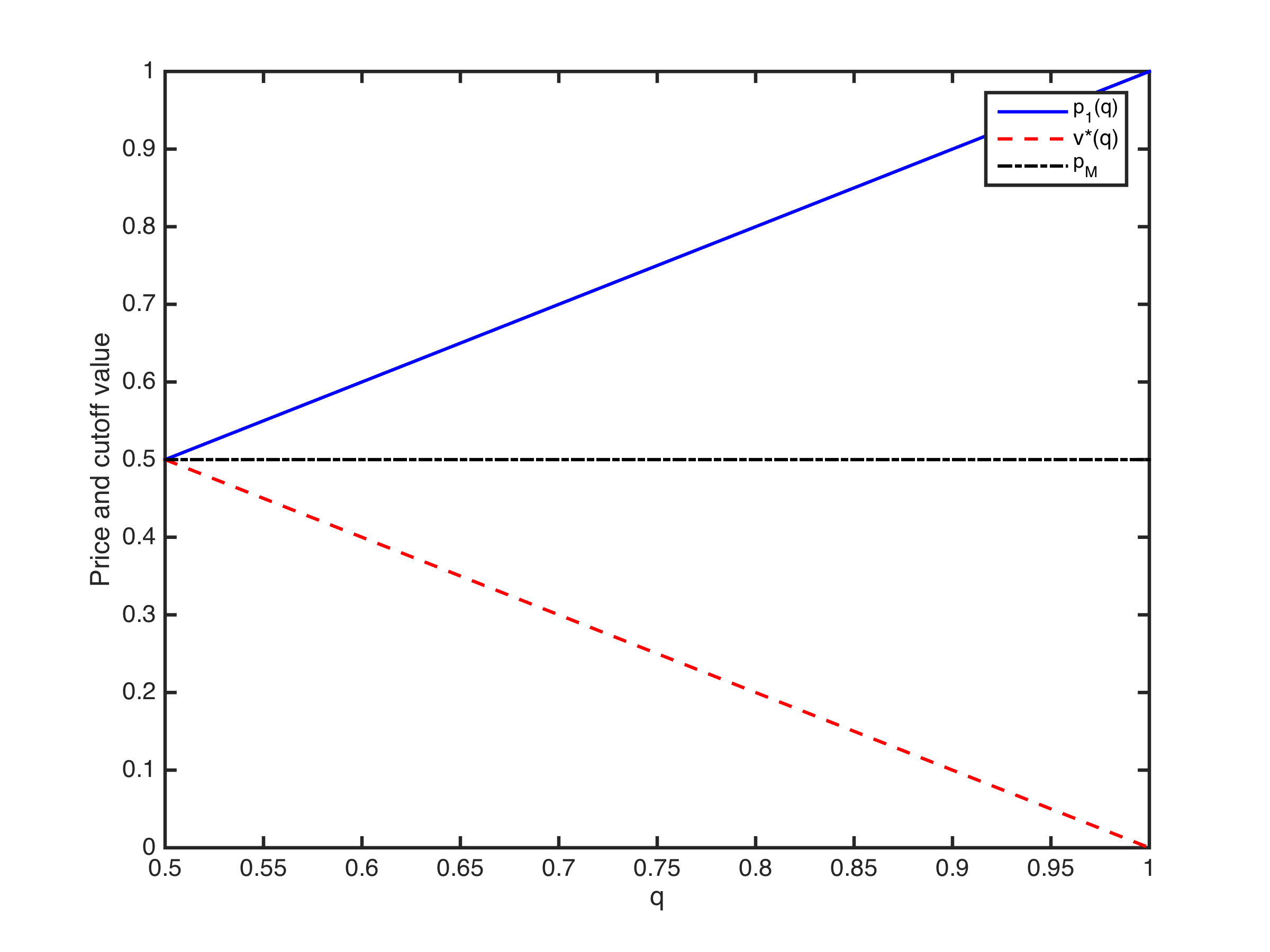}
\caption{A plot of the monopoly price $p_M$ and the equilibrium discriminatory price $p_1$ and cutoff value $v^*$ as a function of $q$.  In a uniform equilibrium, the cutoff value is equal to the monopoly price because consumers behave myopically.}
\label{fig.priceandcutoff}
\end{figure}

Figure \ref{fig.priceandcutoff} shows that as $q$ increases, the discriminatory equilibrium price increases and cutoff value (value of marginal consumer that purchases in period 1) decreases.  Relative to a uniform advertising equilibrium, more consumers purchase the good in period 1, and at a higher price in a discriminatory equilibrium.  Observe further that at $q=1$ (i.e. no privacy), all consumers purchase in period 1, regardless of their value. This is because the value in period 2 of hiding information relevant to their type exceeds the loss that they take by buying at a loss in period 1. Here, when the consumers are offered no privacy protections, they in effect change their behavior to guarantee their own privacy.

The existence and types of equilibria in this game depend on the advertiser's posterior, given that the prices and cutoff values above will arise in period 1.  The advertiser's posterior beliefs about the consumer's type for each realization of $\bhat$ are given below for both the discriminatory and uniform advertising equilibria.
\begin{align*}
r(1,v^*(q),q) &= \frac{-2q^3 + 5q^2 - 3q + 1}{4(q^2 - q + \frac{1}{2})} \\
r(0,v^*(q),q) &= \frac{-2q^3 + 5q^2 - 3q}{4(q^2 - q)} =  \frac{3-2q}{4} \\
r(1, p_M,q) &= \frac{1+2q}{4} \\
r(0, p_M,q) &= \frac{3-2q}{4}
\end{align*}

To illustrate the existence of various equilibria, we plot the advertiser's possible posterior beliefs below as a function of $q$.

\begin{figure}[h!]
\centering
\includegraphics[scale=0.4]{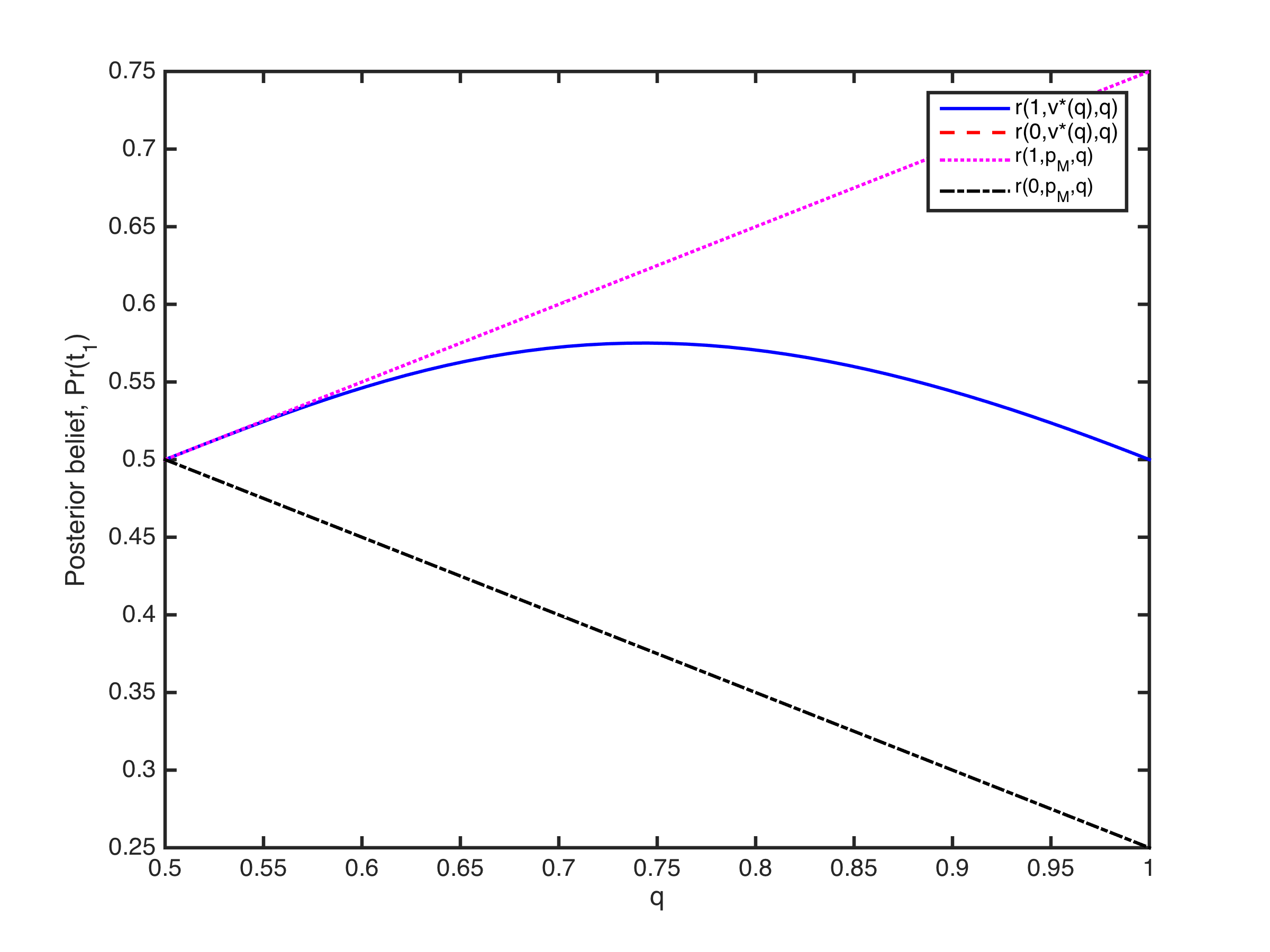}
\includegraphics[scale=0.4]{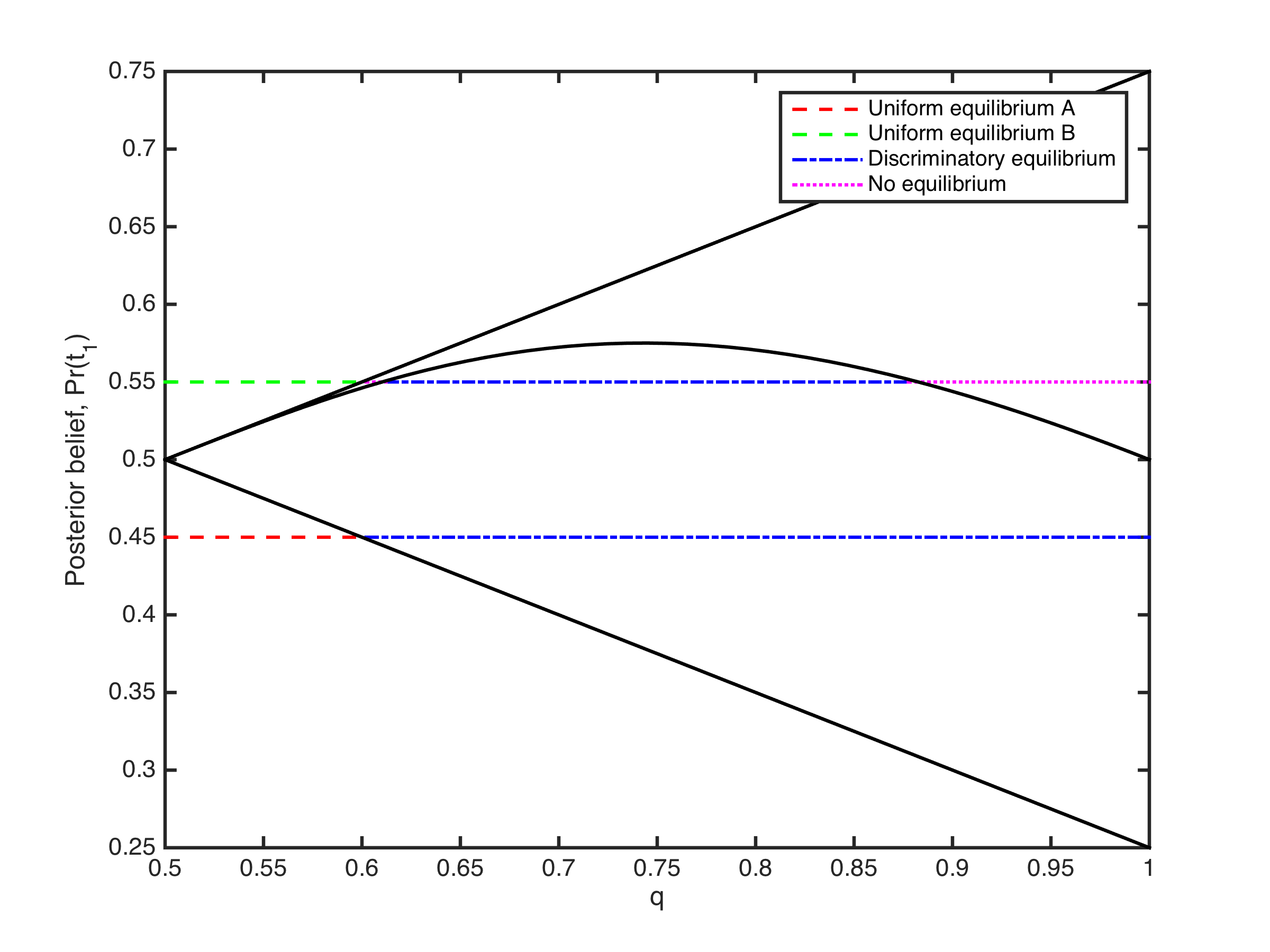}
\caption{On the left is a plot of the advertiser's possible posteriors in both discriminatory and uniform equilibria, given the corresponding prices and cutoff values of Period 1 and an observation of $\bhat$.  In this example, $r(0,v^*(q),q)=r(0, p_M,q)$ for all $q$.  On the right, we show for $\eta=0.45$ and $\eta=0.55$ how these posteriors correspond to different equilibrium types as $q$ varies.}
\label{fig.equilboundaries}
\end{figure}

Note the following counter-intuitive fact: the advertiser's posterior, having seen a noisy ``purchased'' bit, i.e., $\bhat=1$, in a discriminatory equilibrium is non-monotone in the noise level $q$. Statically, if we were to increase the privacy level (i.e. decrease q), we should always expect the advertiser's posterior to be \emph{less informative} about the consumer's type, but as we see here, in equilibrium, increasing the privacy level can sometimes make the advertiser's posterior more accurate.  This can occur because of the two competing implications of adding less noise (i.e. increasing $q$):  on the one hand, the observed bit $\bhat$ is less noisy, and is thus a more accurate indicator of the consumer's purchase decision in period 1.  On the other hand, as $q$ increases, a larger fraction of consumers buy in period 1; the pool of consumers who do purchase the item is ``watered down'' by low-valued consumers who are unlikely to have type $t_1$. This can be viewed as a larger fraction of consumers modifying their behavior to guarantee their own privacy, as the privacy protections inherent in the market are weakened.  The dominating effect on the posterior depends on $q$ and the game parameters.

Similar non-monotonicities can never be observed when the advertiser sees $\bhat=0$ in a discriminatory equilibrium because these two implications are no longer at odds.  As we reduce the amount of noise added, the noisy bit $\bhat$ is still more likely to be accurate.  In addition, the maximum value $v^*(q)$ of a consumer who did not purchase is decreasing in $q$, ensuring that only the lowest valued (and thus the least likely to have type $t_1$) consumers do not purchase in period 1.  These two effects conspire to ensure that $r(0,v^*(q),q)$ is monotonically decreasing in $q$.

Figure \ref{fig.equilboundaries} can be also be used to illustrate to existence of equilibria in this family of games.  Specifying a value of $\eta$ for the game determines the type of equilibria (if any) that exist at each $q$, according to Conditions \eqref{eq.one}, \eqref{eq.two}, and \eqref{eq.three}.  This can be easily visualized using Figure \ref{fig.equilboundaries}.  

Equilibria need not exist for all ranges of $q$: it is possible for none of Conditions \eqref{eq.one}, \eqref{eq.two}, or \eqref{eq.three} to be satisfied for a given $\eta$ and $q$.  
However, in all games, there is a uniform equilibrium at $q=1/2$, where consumers behave myopically in period 1 and then no information is shared with the advertiser.  Equilibria also need not be unique for a given $q$; a discussion of equilibrium multiplicity is deferred to Section \ref{s.mult}.

Next we show that in settings for which a discriminatory equilibrium exists for a range of $q$, the mutual information between the noisy bit $\bhat$ and the consumer's type can be non-monotone in $q$.  In particular, as we increase the privacy protections of the market (i.e. decrease $q$), we can sometimes end up \emph{increasing} the amount of information about the consumer's type present in the advertiser's signal!  This is for similar reasons to those that lead to non-monotonicity of the advertiser's posterior belief --- as the market's privacy protections decrease, consumers change their behavior in order to guarantee their own privacy.   Figure \ref{fig.mutualinfo} plots the mutual information between $\bhat$ and the consumer's type as a function of $q$ in a discriminatory equilibrium.\footnote{As illustrated by Figure \ref{fig.equilboundaries}, there are games for which a discriminatory equilibria exist for the relevant range of $q$, e.g., when $\eta= 1/2$, a discriminatory equilibrium exists for all $q \in [1/2,1]$.} Note that although both the advertiser's posterior and the mutual information between the consumer's purchase decision and the signal exhibit similar non-monotonicities in $q$, they do not peak at the same value of $q$!

\begin{figure}[h!]
\centering
\includegraphics[scale=0.4]{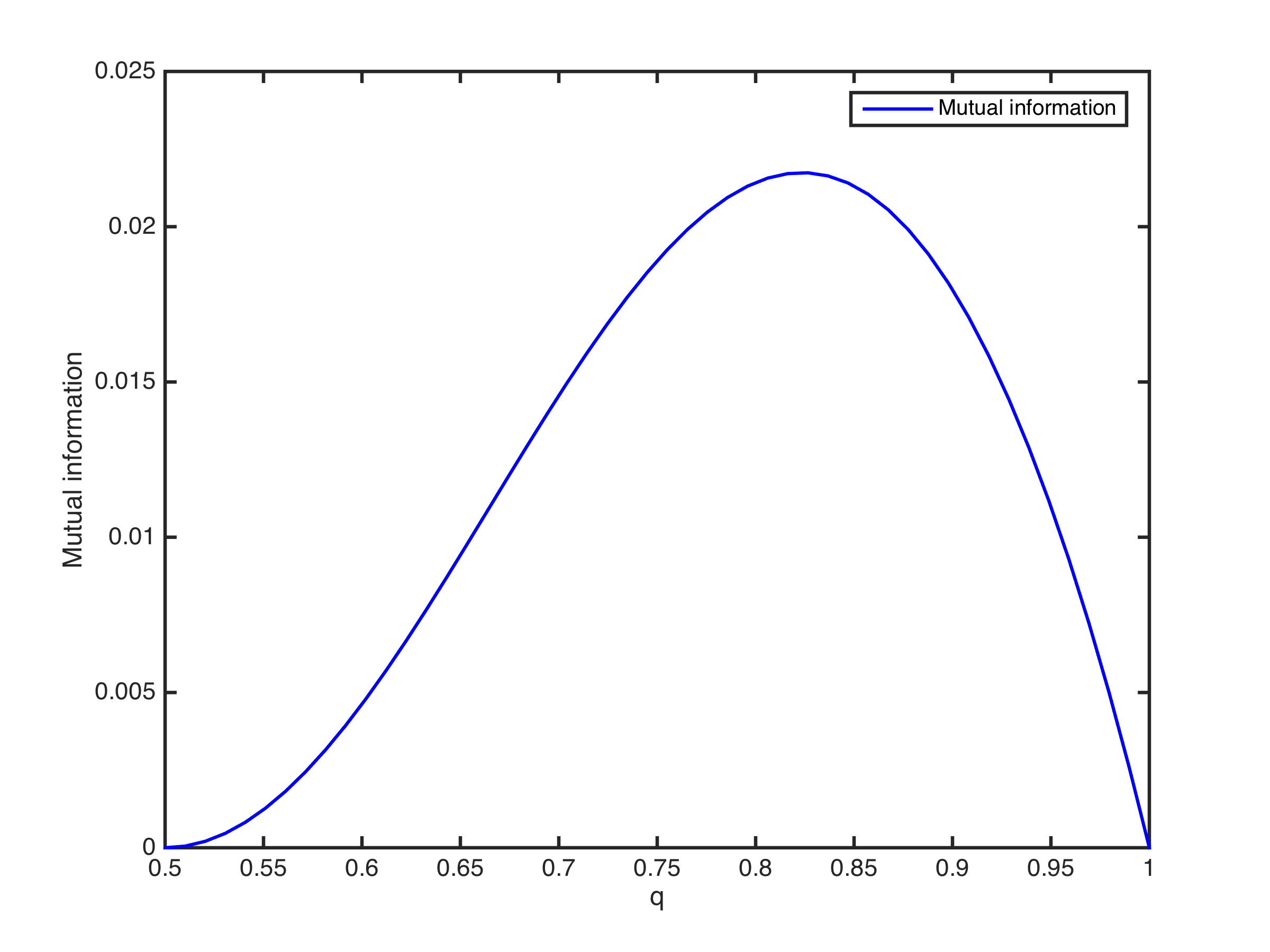}
\caption{A plot of the mutual information between $\bhat$ and the consumer's type in a discriminatory equilibrium.}
\label{fig.mutualinfo}
\end{figure}

These phenomena together suggest that in the range of $q$ in which the mutual information is decreasing, the advertiser might actually prefer that the market include stronger privacy guarantees.  Indeed, Figure \ref{fig.advertiserutility} plots the advertiser's utility in a discriminatory equilibrium as a function of $q$, where $s_{1A}=s_{2B}=1$ and $s_{1B}=s_{2A}=0$.  This setting of parameters gives $\eta=1/2$, where a discriminatory equilibrium exists for all $q$.

\begin{figure}[h!]
\centering
\includegraphics[scale=0.4]{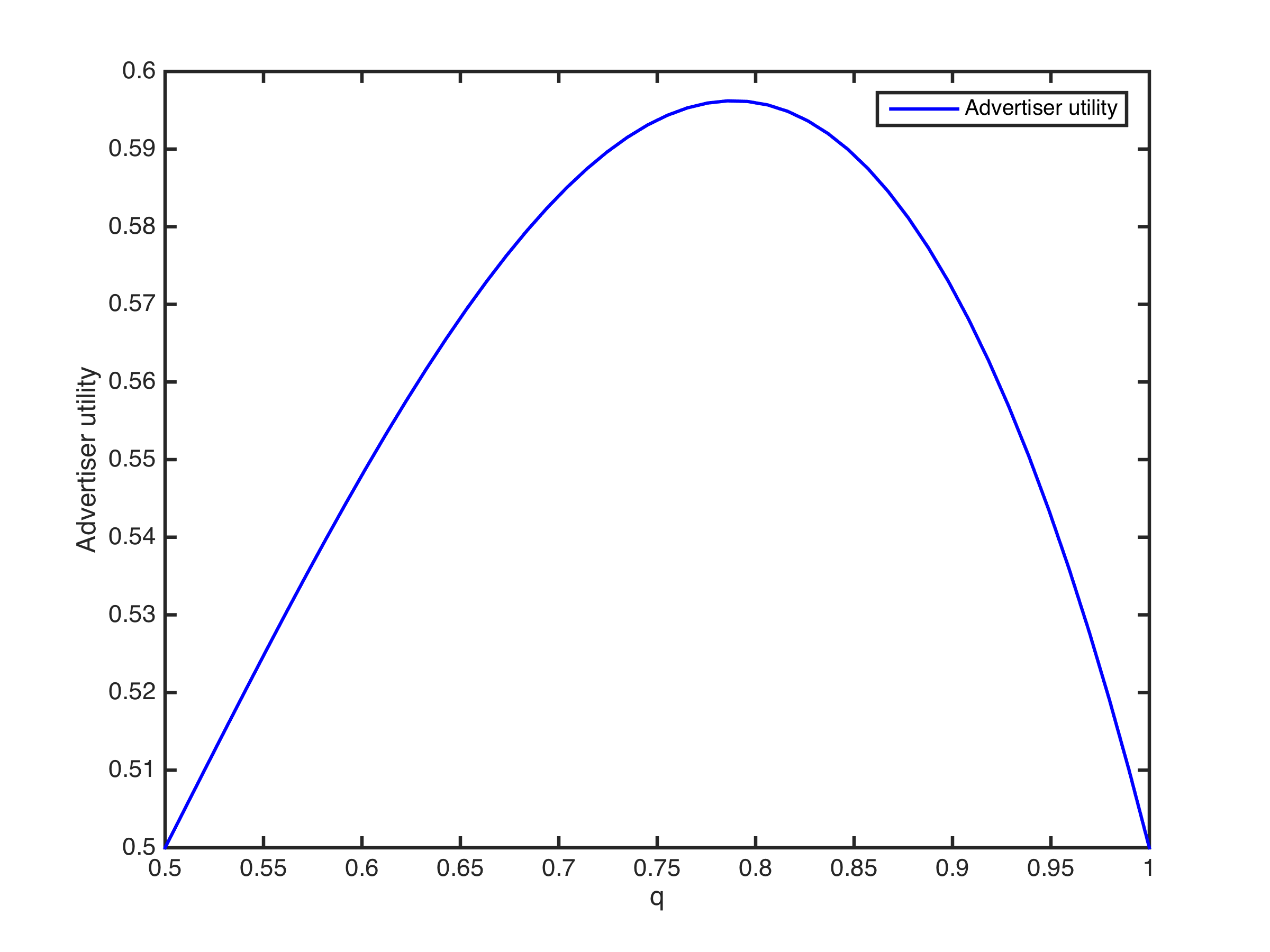}
\caption{A plot of the advertiser's utility in a discriminatory equilibrium, where $s_{1A}=s_{2B}=1$ and $s_{1B}=s_{2A}=0$. These parameters correspond to $\eta = 1/2$, which ensures that a discriminatory equilibrium exists for all $q \in [1/2,1]$.}
\label{fig.advertiserutility}
\end{figure}

As predicted, the advertiser's utility is non-monotone in $q$.  However, the advertiser's utility is not maximized at the same value of $q$ that maximizes the mutual information.  Thus, the advertiser's interests are not necessarily aligned with the goal of learning as much information about the consumer as possible --- and are certainly not incompatible with privacy technologies being introduced into the market. Indeed, the ideal level of $q$ for the advertiser is strictly on the interior of the feasible set $[1/2,1]$.

We would also like to understand how consumer surplus and profit vary with $q$. These are confounded by the fact that for a fixed level of $\eta$, an equilibrium type may or may not exist for a given $q$. To simplify the analysis to not account for this existence problem, consider the following though experiment: For each equilibrium type, and any $q$, pick $\eta$ such that the appropriate one of Conditions \eqref{eq.one}, \eqref{eq.two}, or \eqref{eq.three}  is satisfied. Fixing the advertiser's equilibrium behavior, $\eta$ only affects the advertiser's payoff, not the seller's or buyer's. Figure \ref{fig.csandprofit} plots the consumer's surplus and the seller's profit as a function of $q$ for each equilibrium type, under this artificial thought experiment.

\begin{figure}[h!]
\centering
\includegraphics[scale=0.4]{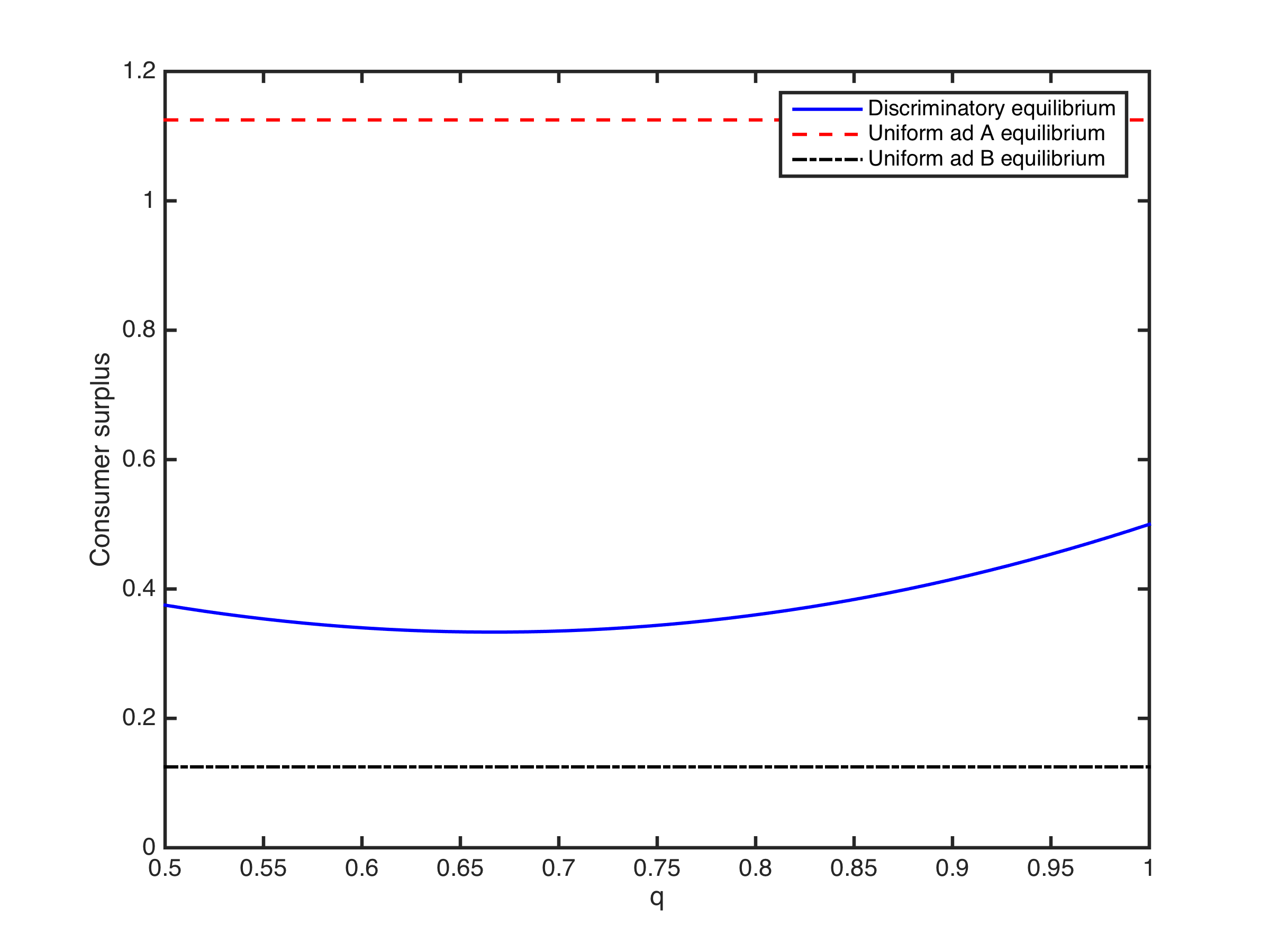}
\includegraphics[scale=0.4]{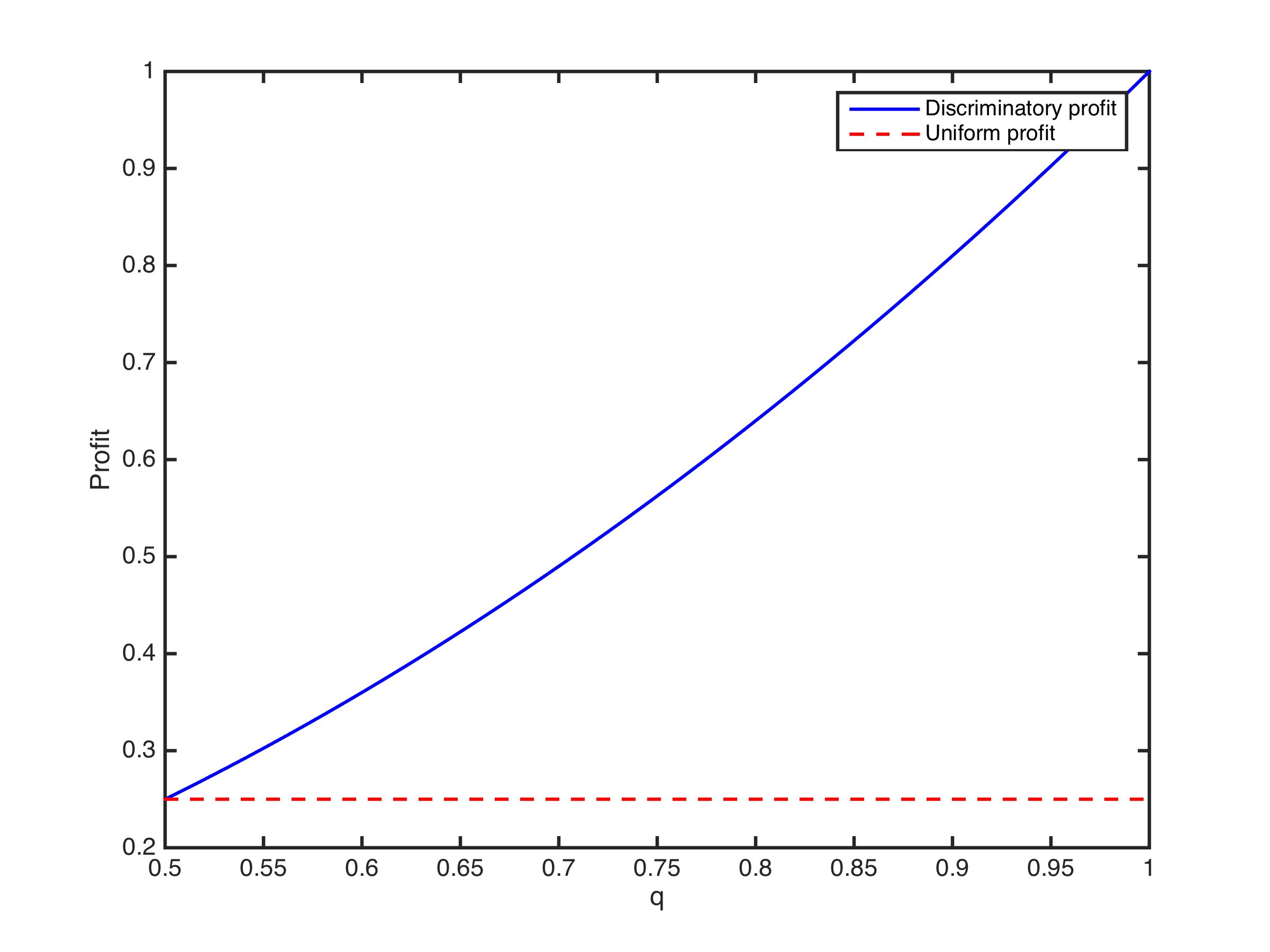}
\caption{On the left is a plot of consumer surplus in all three equilibrium types.  On the right is a plot of the seller's profit in both discriminatory and uniform equilibria.  The seller's profit in a uniform equilibrium doesn't depend on the ad shown in period 2.}
\label{fig.csandprofit}
\end{figure}

In a discriminatory equilibrium, the consumer surplus is convex in $q$; and although at certain points on the curve, consumers have positive marginal utility for increased privacy (i.e. a smaller value of $q$), the consumer's globally optimal value is $q=1$.  This means that given the equilibrium effects, consumers fare the best with no privacy and would prefer to have their purchase decision revealed exactly.  The seller also prefers $q=1$, because as $q$ increases, more consumers purchase the good at a higher price in equilibrium.  Unsurprisingly, consumer surplus and revenue are constant across $q$ in uniform advertising equilibria.

In addition to changes for a fixed equilibrium type, varying $q$ can also change the type of equilibria that exist in the game.  Figure \ref{fig.equilboundaries} showed that small changes in $q$ can cause new equilibria to spring into existence or disappear.  Thus the consumer surplus, profit, and advertiser's utility can jump discontinuously in $q$.  We illustrate this phenomenon with two examples: the first example illustrates changes between a uniform advertising equilibrium B, a discriminatory equilibrium, and no equilibrium.  The second shows a change from a uniform advertising equilibrium A to a discriminatory equilibrium.

For the first example, set $s_{1A}=.5$, $s_{2B}=.6$, and $s_{1B}=s_{2A}=.05$, which implies $\eta = .55$.  As illustrated in Figure \ref{fig.equilboundaries}, as we increase $q$ from $1/2$ to $1$, there is first a uniform advertising equilibrium B, then equilibria briefly cease to exist, then discriminatory equilibria exist, and finally no equilibria exist for large $q$.  Each change of equilibrium existence results in discrete jumps in the consumer surplus, profit, and advertiser utility.  Figure \ref{fig.equil1and3} illustrates below.

\begin{figure}[h!]
\centering
\includegraphics[scale=0.4]{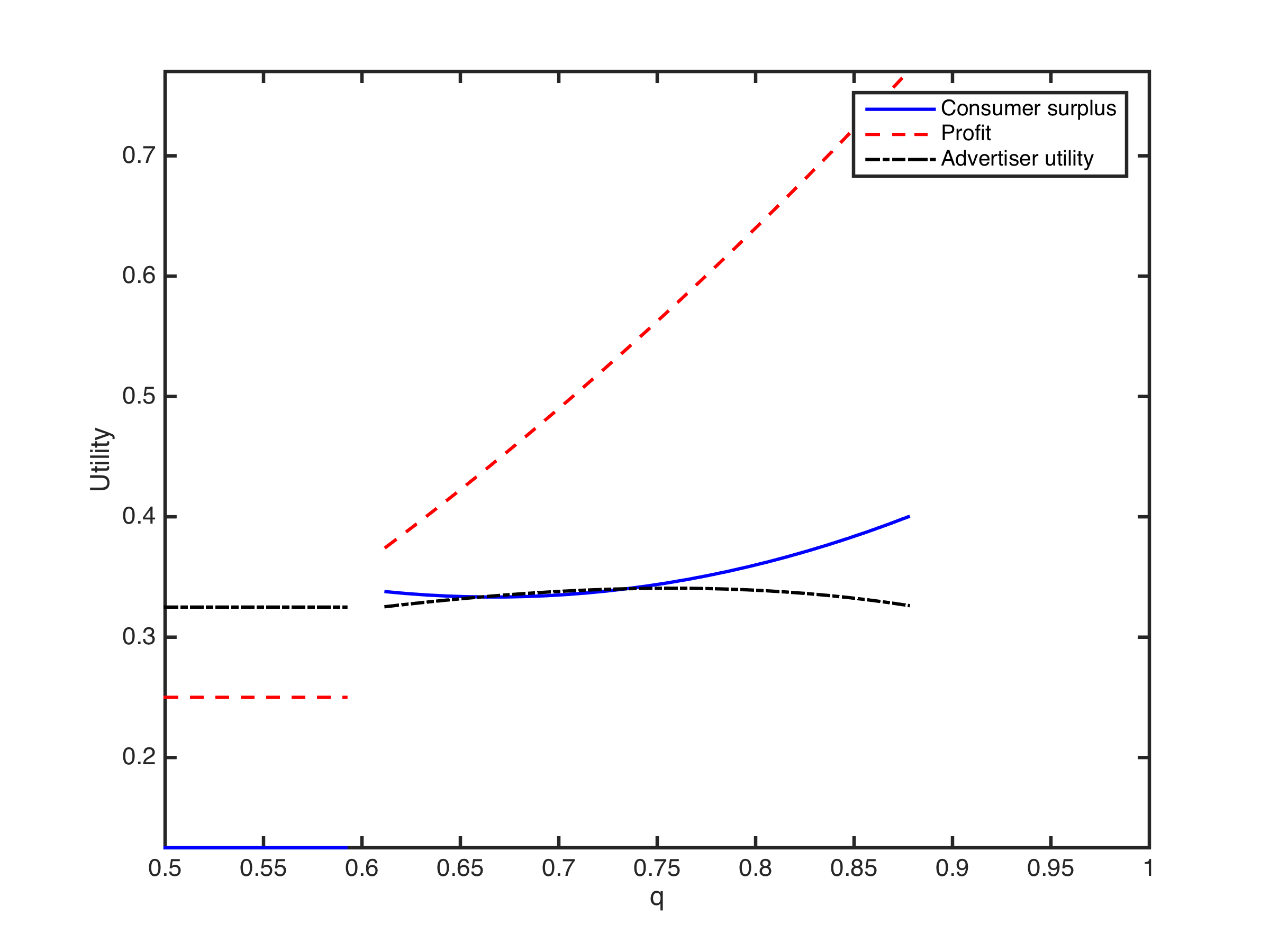}
\caption{A plot of consumer surplus, profit, and advertiser utility in equilibrium when $s_{1A}=.5$, $s_{2B}=.6$, and $s_{1B}=s_{2A}=.05$.  Discontinuities correspond to changes of the equilibrium type that exists, or an absence of equilibria altogether.}
\label{fig.equil1and3}
\end{figure}

For the second example, set $s_{1A}=.6$, $s_{2B}=.5$, and $s_{1B}=s_{2A}=.05$, which implies $\eta = .45$.  In this game, as $q$ increases from $1/2$ to $1$, the equilibrium type changes discretely from a uniform advertising equilibrium A to a discriminatory equilibrium, also illustrated in Figure \ref{fig.equilboundaries}.  This change also causes discontinuities in the consumer surplus, profit, and advertiser utility. Note here that a tiny decrease in the level of privacy (i.e. increase in $q$) can cause a precipitous drop in welfare. Figure \ref{fig.equil1and2} illustrates this effect.

\begin{figure}[h!]
\centering
\includegraphics[scale=0.4]{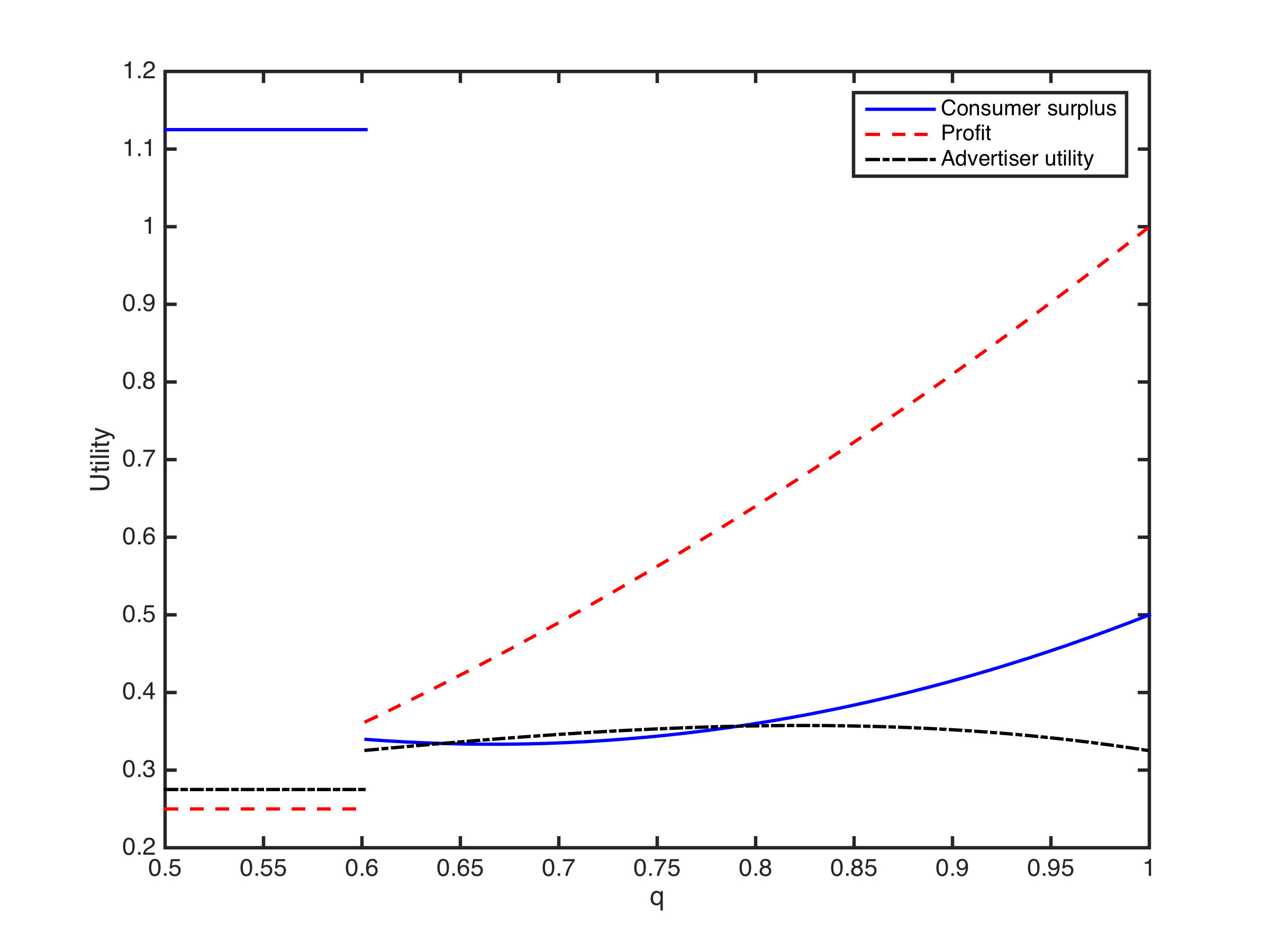}
\caption{A plot of consumer surplus, profit, and advertiser utility in equilibrium when $s_{1A}=.6$, $s_{2B}=.5$, and $s_{1B}=s_{2A}=.05$.  Discontinuities correspond to changes of the equilibrium type that exists.}
\label{fig.equil1and2}
\end{figure}

\section{Multiplicty of Equilibria and Other Results}\label{s.mult}
In this section we build on the example of the previous section to provide some formal results about equilibrium multiplicity, welfare, etc. as a function of the promised level of privacy offered to the consumer. Proofs are deferred to the Appendix.

\subsection{Multiplicity}
\begin{restatable}{proposition}{eqoneandtwo}\label{prop.eq1and2}
A discriminatory equilibrium and a uniform advertising equilibrium A can coexist in the same game for the same level of $q$.
\end{restatable}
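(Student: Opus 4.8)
The plan is to exhibit one game and one value of $q$ at which the existence criteria for \emph{both} equilibrium types hold. By Observation~\ref{obs.existence} (in the strict form of~\eqref{eq.one}) a discriminatory equilibrium exists at $q$ iff $r(1,v^*(q),q)>\eta>r(0,v^*(q),q)$, and by Proposition~\ref{prop.case2and3} a uniform advertising equilibrium A exists at $q$ iff~\eqref{eq.two} holds at $v^*=p_M$, i.e. $r(1,p_M,q)>\eta$ and $r(0,p_M,q)>\eta$. Since $\eta$ enters only through the advertiser's payoffs, and any $\eta\in(0,1)$ is realized by a payoff profile respecting Assumption~\ref{ass.one} (e.g. $s_{1B}=s_{2A}=0$ and $\eta=s_{2B}/(s_{1A}+s_{2B})$), it suffices to find $F,g,\delta,q$ and an $\eta$ with $r(0,v^*(q),q)<\eta<r(0,p_M,q)$ and $\eta<r(1,v^*(q),q)$; the remaining condition $r(1,p_M,q)>\eta$ is then automatic from Lemma~\ref{prop.morelikely}, since $r(1,p_M,q)\ge r(0,p_M,q)>\eta$. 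Such an $\eta$ exists exactly when $r(0,v^*(q),q)<r(0,p_M,q)$ and $r(0,v^*(q),q)<r(1,v^*(q),q)$.

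The second inequality is just the strict form of Lemma~\ref{prop.morelikely}: when $g$ is strictly increasing the quantities $\alpha_1,\alpha_2$ from the proof of Lemma~\ref{prop.changeq} satisfy $\alpha_1<\alpha_2$, so $r(1,v^*,\cdot)$ is strictly increasing on $[1/2,1]$, and combined with $r(0,v^*,q)=r(1,v^*,1-q)$ and $1-q<q$ this gives $r(0,v^*(q),q)<r(1,v^*(q),q)$ for every $q\in(1/2,1]$. The first inequality, $r(0,v^*(q),q)<r(0,p_M,q)$, is the crux, and it is precisely where the running example of Section~\ref{s.example} is misleading: there ($F$ uniform, $g(v)=v$, $\delta=1$) one has the knife-edge identity $r(0,v^*(q),q)=r(0,p_M,q)=(3-2q)/4$, so the two equilibria cannot coexist. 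But this equality is non-generic. By Lemma~\ref{prop.ordering} we have $v^*(q)<p_M$ strictly for $q\in(1/2,1]$ (using $p_1'>0$, $v^{*\prime}<0$ from Lemma~\ref{prop.deriv}, with equality only at $q=1/2$), and the advertiser's ``did-not-purchase'' posterior $r(0,\cdot,q)$ is increasing in the cutoff once noise is not too large; perturbing the example so that $v^*(q)$ falls strictly below $p_M$ therefore separates the posteriors. Concretely, keep $F$ uniform on $[0,1]$ and $g(v)=v$ but take $\delta\in(0,1)$, so that $v^*(q)=\frac12-(2q-1)\frac{\delta}{2}\in(0,1)$ for all $q$. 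Evaluating $r(0,v^*,q)$ at $v^*=v^*(q)$ and at $v^*=p_M=\frac12$ (a short computation) gives, writing $a=2q-1$,
\[ r(0,v^*(q),q)-r(0,p_M,q)=\frac{a^{3}\,\delta\,(\delta-1)}{4\,(1-a^{2}\delta)}<0 \quad\text{for all } q\in(1/2,1],\ \delta\in(0,1). \]

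Assembling: fix, say, $\delta=\tfrac12$ and $q=\tfrac34$. Both inequalities above hold, so the interval $\big(r(0,v^*(q),q),\ \min\{r(1,v^*(q),q),\,r(0,p_M,q)\}\big)$ is nonempty; choose $\eta$ in it and advertiser payoffs (e.g. $s_{1B}=s_{2A}=0$, $s_{2B}/(s_{1A}+s_{2B})=\eta$) realizing it. Then $r(1,v^*(q),q)>\eta>r(0,v^*(q),q)$, so a discriminatory equilibrium exists at $q$ by Observation~\ref{obs.existence}; and $r(0,p_M,q)>\eta$ together with $r(1,p_M,q)\ge r(0,p_M,q)>\eta$ means~\eqref{eq.two} holds at $v^*=p_M$, so a uniform advertising equilibrium A exists at $q$ by Proposition~\ref{prop.case2and3}. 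Hence both coexist in this game at $q=\tfrac34$. The only non-routine step is the key inequality $r(0,v^*(q),q)<r(0,p_M,q)$ --- equivalently, recognizing that the Section~\ref{s.example} example sits on a knife-edge and must be perturbed (via $\delta<1$) to break the coincidental equality of the two ``did-not-purchase'' posteriors; once that is in hand, the rest is bookkeeping with Lemmas~\ref{prop.changeq},~\ref{prop.morelikely},~\ref{prop.ordering} and the existence criteria already established.
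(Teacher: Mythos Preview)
Your proof is correct, and it takes a genuinely different route from the paper's. The paper constructs its example with $F$ uniform but with $g$ a step function (the indicator $\mathbbm{1}\{v>(1-\delta)/2\}$), then verifies both equilibria coexist on an explicit interval of $q$ via direct computation and a monotonicity argument on $r(0,v^*(q),q)$. You instead stay inside the Section~\ref{s.example} family ($F$ uniform, $g(v)=v$) and observe that the coincidence $r(0,v^*(q),q)=r(0,p_M,q)$ displayed there is a knife-edge artifact of $\delta=1$; your closed-form difference
\[
r(0,v^*(q),q)-r(0,p_M,q)=\frac{a^{3}\delta(\delta-1)}{4(1-a^{2}\delta)},\qquad a=2q-1,
\]
shows the inequality goes the right way for every $\delta\in(0,1)$ and every $q\in(1/2,1]$, after which picking $\eta$ in the resulting open interval finishes. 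This is tidier than the paper's construction and has the pedagogical payoff of explaining exactly \emph{why} Figure~\ref{fig.equilboundaries} fails to exhibit coexistence while a nearby game does. The paper's step-function choice, on the other hand, makes $r(1,v^*(q),q)>\eta$ trivially true for all $q$ and yields an explicit coexistence interval in terms of $\eta$ and $\delta$ without needing to bound $r(1,v^*(q),q)$.

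One expository quibble: the sentence ``perturbing the example so that $v^*(q)$ falls strictly below $p_M$ therefore separates the posteriors'' is misleading, since $v^*(q)<p_M$ already holds strictly in the $\delta=1$ example (Lemma~\ref{prop.ordering}) and yet the posteriors coincide there; the separation comes from $\delta<1$, not from $v^*(q)<p_M$. This is harmless because you then supply the exact formula, which is the actual argument; you might simply drop that motivational sentence.
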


\begin{restatable}{proposition}{cwcaseoneandtwo}\label{prop.cwcase12}
If a discriminatory equilibrium and a uniform advertising equilibrium A coexist at a given noise level $q$, then the buyer prefers the uniform equilibrium to the discriminatory equilibrium, regardless of her value $v$.  On the other hand, the seller prefers the discriminatory equilibrium.
\end{restatable}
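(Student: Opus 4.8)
The plan is to write down closed-form expressions for the buyer's surplus (as a function of her value $v$) and for the seller's profit in each of the two coexisting equilibrium types, and then compare them directly, using the price/cutoff ordering $v^*(q) \le p_M \le p_1(q)$ from Lemma~\ref{prop.ordering} together with $q \ge 1/2$.

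For the buyer: in a uniform advertising equilibrium $A$ the advertiser shows ad $A$ regardless of $\bhat$, so the consumer always collects the period-2 bonus $\delta$ and behaves myopically in period 1, buying iff $v \ge p_M$; hence her surplus is $\max\{v - p_M, 0\} + \delta$. In a discriminatory equilibrium a consumer who purchases sees $\bhat = 1$ with probability $q$ (expected bonus $q\delta$) while one who abstains gets $(1-q)\delta$, and the purchase threshold is $v^*(q) = p_1(q) + (1-2q)\delta$; so her surplus is $(v - p_1(q)) + q\delta$ if $v \ge v^*(q)$ and $(1-q)\delta$ otherwise. I then split into the three ranges cut out by $v^*(q) \le p_M \le p_1(q)$: (i) $v < v^*(q)$, where the discriminatory surplus $(1-q)\delta \le \tfrac12\delta \le \delta$ is at most the uniform surplus since $q \ge 1/2$; (ii) $v^*(q) \le v < p_M$, where the consumer buys only in the discriminatory equilibrium, obtaining $(v - p_1(q)) + q\delta$, which is below $\delta$ because $v < p_M \le p_1(q)$ gives $v - p_1(q) < 0 \le (1-q)\delta$, whereas the uniform equilibrium gives exactly $\delta$; (iii) $v \ge p_M$, where uniform minus discriminatory equals $(p_1(q) - p_M) + (1-q)\delta \ge 0$. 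In every range the uniform surplus weakly dominates.

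For the seller: the seller is indifferent to period 2, so his payoff in each equilibrium is just the period-1 revenue. Writing $\Pi_q(p) = p\,(1 - F(p + (1-2q)\delta))$, the discriminatory profit is $\max_p \Pi_q(p) = \Pi_q(p_1(q))$ by the first-order condition~\eqref{eqn:foc} defining $p_1(q)$, while the uniform profit is $\Pi_{1/2}(p_M) = p_M(1-F(p_M))$, the monopoly revenue. Since $q \ge 1/2$ forces $(1-2q)\delta \le 0$ and $F$ is non-decreasing, $\Pi_q(p) \ge \Pi_{1/2}(p)$ for every $p$; in particular $\max_p \Pi_q(p) \ge \Pi_q(p_M) \ge \Pi_{1/2}(p_M)$, which is exactly the statement that the seller weakly prefers the discriminatory equilibrium.

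I expect the only mildly delicate point to be the middle range $v^*(q) \le v < p_M$ for the buyer: there the consumer actually changes her period-1 decision across the two equilibria, so one cannot compare surpluses term by term, and one must instead invoke that such a consumer buys at a \emph{period-1 loss} in the discriminatory equilibrium — a purchase justified precisely by the definition of the cutoff $v^*(q)$ — which is what makes her worse off than in the uniform equilibrium, where she abstains yet still enjoys ad $A$. Every other step is an immediate consequence of Lemma~\ref{prop.ordering} and $q \ge 1/2$.
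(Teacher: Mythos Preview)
Your proof is correct and follows essentially the same approach as the paper, which in fact omits a formal argument and only sketches the intuition: the buyer faces a lower price and a (weakly) better ad in the uniform equilibrium, while the seller sells to more consumers at a higher price in the discriminatory one. Your case split on the three ranges of $v$ is the natural formalization of the buyer side; for the seller you route through the optimality of $p_1(q)$ in $\Pi_q(\cdot)$ together with $\Pi_q \ge \Pi_{1/2}$, whereas the paper compares $p_1(q)(1-F(v^*(q)))$ and $p_M(1-F(p_M))$ term by term via $p_1(q) \ge p_M$ and $v^*(q) \le p_M$ --- both arguments are immediate consequences of Lemma~\ref{prop.ordering}.
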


Proposition \ref{prop.cwcase12} is intuitive, so we omit a formal proof. To see the first claim, observe that a buyer faces both a lower price in period 1 and sees a better ad in period 2, so she is always better off in the uniform advertising equilibrium, regardless of her value. To see the latter, observe that the discriminatory equilibrium allows the seller to sell to more consumers ($v^*(q) < p_M$) at a higher price ($p_1(q) > p_M$).

\paragraph{Takeaway} Small changes in $q$ can make uniform advertising equilibria cease to exist, 
 and can therefore have a discrete impact on welfare and revenue, as they cause the equilibrium to shift discontinuously from uniform to discriminatory. At these boundaries, the buyer may (strictly, discontinuously) prefer slightly less privacy while the seller may (strictly, discontinuously) prefer more privacy!

\begin{restatable}{proposition}{eqoneandthree}\label{prop.1and3}
A discriminatory equilibrium and a uniform advertising equilibrium B can coexist in the same game for the same level of $q$.
\end{restatable}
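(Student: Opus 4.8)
The statement is an existence claim, so the plan is to exhibit one game and one noise level $q$ at which a discriminatory equilibrium and a uniform advertising equilibrium $B$ are simultaneously present. Unwinding the earlier characterizations: a uniform $B$ equilibrium lives at $q$ exactly when $r(1,p_M,q)<\eta$ (which by Lemma~\ref{prop.morelikely} also gives $r(0,p_M,q)<\eta$) and, at $q=1/2$, the prior $\Pr(t_1)$ lies below $\eta$ --- so that by Proposition~\ref{prop.case2and3} uniform $B$ persists over the interval $[1/2,\bar q_3]$ containing $q$; a discriminatory equilibrium lives at $q$ exactly when $r(1,v^*(q),q)\geq\eta$ and $r(0,v^*(q),q)\leq\eta$ (Observation~\ref{obs.existence}). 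Hence it suffices to produce a game and a $q$ with
\[
  \Pr(t_1)<\eta
  \qquad\text{and}\qquad
  \max\{\,r(1,p_M,q),\ r(0,v^*(q),q)\,\}<\eta<r(1,v^*(q),q);
\]
any $\eta$ strictly inside the resulting window then works, and the payoffs $s_{1A},s_{1B},s_{2A},s_{2B}$ can be chosen afterwards to realize that $\eta$.

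The crux is forcing $r(1,v^*(q),q)>r(1,p_M,q)$, and this is exactly where the running example of Section~\ref{s.example} cannot be used: with $F$ uniform, $g(v)=v$, $\delta=1$ one computes $r(1,v^*(q),q)-r(1,p_M,q)=-\frac{(2q-1)^3}{8(q^2-q+\frac{1}{2})}\leq 0$. The reason is structural: moving the purchase cutoff \emph{down} from $p_M$ to $v^*(q)$ in a discriminatory equilibrium adds consumers with values in $[v^*(q),p_M]$ to the ``purchased'' pool, and when $g(v)=v$ these marginal purchasers drag the posterior $r(1,\cdot)$ down. To flip the sign I would make $g$ steep enough that those marginal purchasers are still (almost) surely type $t_1$, so that the lower cutoff \emph{concentrates} rather than dilutes the high types. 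Concretely, keep $F$ uniform on $[0,1]$ and $\delta=1$ (so $p_M=1/2$, $p_1(q)=q$, $v^*(q)=1-q$ are unchanged, $g$ not entering the period-$1$ optimization of Lemma~\ref{prop.deriv}), and take $g=\mathbbm{1}[v\geq v_0]$ with $v_0=3/10$; fix $q=3/5$, so that $v_0<v^*(q)=2/5<p_M=1/2$ and every purchaser in either equilibrium is type $t_1$. A short computation then gives
\[
  r(1,v^*(q),q)=\frac{q^2+(1-q)(1-q-v_0)}{q^2+(1-q)^2}=\frac{10}{13},\quad
  r(1,p_M,q)=\frac{2}{5}+\frac{3}{5}q=\frac{19}{25},\quad
  r(0,v^*(q),q)=\frac{5}{8},
\]
while $\Pr(t_1)=1-v_0=\frac{7}{10}$. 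Since $\frac{7}{10}<\frac{19}{25}<\frac{10}{13}$, every $\eta\in(\frac{19}{25},\frac{10}{13})$ meets all of the displayed requirements; taking, e.g., $\eta=0.765$ and then $s_{1A}=1$, $s_{1B}=s_{2A}=0$, $s_{2B}=\eta/(1-\eta)$ realizes it while respecting Assumption~\ref{ass.one}.

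What remains is routine checking: (i) at $q=1/2$ the prior $\frac{7}{10}<\eta$, so uniform $B$ exists there, and since $r(1,p_M,q)=\frac{2}{5}+\frac{3}{5}q$ is increasing in $q$ and equals $\eta$ only at $\bar q_3=\frac{5}{3}(\eta-\frac{2}{5})>\frac{3}{5}$, uniform $B$ still exists at $q=3/5$; (ii) $r(1,v^*(q),q)>\eta>r(0,v^*(q),q)$ at $q=3/5$, so by Observation~\ref{obs.existence} a discriminatory equilibrium also exists there, the seller's price $p_1(3/5)=3/5$ being optimal because $p\mapsto p\,(1-F(p-(2q-1)\delta))=p(2q-p)$ is concave on the feasible range with interior maximizer $p=q$; (iii) the hazard-rate and monotonicity hypotheses of Assumption~\ref{ass.one} hold, since $F$ is uniform, $g$ is a non-decreasing indicator, and $\delta>0$. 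I expect the only genuine obstacle to be the step just identified --- recognizing that one must leave the Section~\ref{s.example} family and choose $g$ steep enough that the discriminatory equilibrium's lower cutoff adds, rather than removes, high-type consumers among the purchasers; once the right $g$ is fixed, the rest is bookkeeping.
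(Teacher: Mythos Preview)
Your argument is correct and follows essentially the same route as the paper: exhibit a single game with $F$ uniform on $[0,1]$ and $g$ a step function, then verify numerically that at a fixed $q$ both $r(1,p_M,q)<\eta$ and $r(1,v^*(q),q)>\eta>r(0,v^*(q),q)$ hold. The paper reuses the construction from Proposition~\ref{prop.eq1and2} (step at $\tfrac{1-\delta}{2}$, $\delta=0.9$, $\eta=0.984$, $q=0.8$); you instead take $\delta=1$, step at $3/10$, $q=3/5$, $\eta=0.765$, which works just as well. Your added diagnosis of why the Section~\ref{s.example} family with $g(v)=v$ fails---namely that there $r(1,v^*(q),q)-r(1,p_M,q)=-\tfrac{(2q-1)^3}{8(q^2-q+1/2)}\le 0$, so no $\eta$ can separate them---is a nice piece of motivation that the paper omits.
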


\begin{restatable}{proposition}{cwcaseoneandthree}\label{prop.cwcase13}
In any game where both a discriminatory equilibrium and a uniform advertising equilibrium B exist for the same value of $q$, average consumer welfare is always higher under the discriminatory equilibrium, but individual consumers may have different preferences for these two equilibria.
\end{restatable}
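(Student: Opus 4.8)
The plan is to write each consumer's total two-period utility as a function of her value $v$ in both equilibria, compare the two profiles pointwise, and then integrate. In the uniform advertising equilibrium $B$ the consumer is myopic and always sees ad $B$, so she obtains $U_U(v)=\max\{v-p_M,\,0\}$, taking ad $B$ as the period-$2$ baseline. In a discriminatory equilibrium the seller charges $p_1(q)$, the consumer buys iff $v\ge v^*(q)$, and in period $2$ she sees ad $A$ with probability $q$ if she bought (her true bit is $1$) and with probability $1-q$ otherwise; the marginal-consumer identity $v^*(q)=p_1(q)+(1-2q)\delta$ from \eqref{eqn:purchase} is exactly what makes the two branches of her payoff agree at $v=v^*(q)$, so $U_D(v)=\max\{(v-p_1(q))+q\delta,\ (1-q)\delta\}$.

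For the average-welfare statement I would establish the pointwise bound $U_D(v)-U_U(v)\ge(1-q)\delta$ by splitting on $v<v^*(q)$, $v^*(q)\le v<p_M$, and $v\ge p_M$, using $v^*(q)\le p_M\le p_1(q)$ (Lemma~\ref{prop.ordering}). On the first interval $U_U(v)=0$ and $U_D(v)=(1-q)\delta$; on the second $U_U(v)=0$ and $U_D(v)=(v-p_1(q))+q\delta\ge(v^*(q)-p_1(q))+q\delta=(1-q)\delta$; on the third both consumers buy, so $U_D(v)-U_U(v)=(p_M-p_1(q))+q\delta$, which is $\ge(1-q)\delta$ because $p_1(q)-p_M\le(2q-1)\delta$ — and this last inequality is merely $v^*(q)=p_1(q)-(2q-1)\delta\le p_M$ rearranged. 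Integrating against $f$, average consumer surplus in the discriminatory equilibrium exceeds that in the uniform-$B$ equilibrium by at least $(1-q)\delta$, which is strictly positive since coexistence of the two equilibria forces $q\in(1/2,1)$: at $q=1/2$ the advertiser's posterior equals his prior for every cutoff, so condition \eqref{eq.one} cannot hold; and at $q=1$ a discriminatory equilibrium needs $r(1,v^*(1),1)\ge\eta$ (Observation~\ref{obs.existence}) while uniform $B$ needs $r(1,p_M,1)<\eta$, and these are incompatible because $v^*(1)<p_M$ and $v\mapsto r(1,v,1)$ is non-decreasing (it is the conditional mean of the non-decreasing function $g$ over the upper tail $\{u\ge v\}$).

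For the second part I would read off from the same case analysis that the per-consumer welfare difference is a genuinely non-constant function of $v$: $U_D(v)-U_U(v)$ equals the constant $(1-q)\delta$ on $[0,v^*(q))$, is strictly increasing on $[v^*(q),p_M)$, and equals the constant $(p_M-p_1(q))+q\delta$ on $[p_M,1]$, the latter strictly larger than the former since $p_1(q)-p_M<(2q-1)\delta$ whenever $q>1/2$ (strict because $v^*(q)<p_M$ strictly for $q>1/2$ by Lemma~\ref{prop.deriv}), which is the relevant range once the two equilibria coexist. Thus, although every consumer weakly prefers the discriminatory equilibrium, the intensity of this preference depends on $v$ — the highest-value consumers gain the most, the lowest-value consumers the least — so, in contrast with the discriminatory-versus-uniform-$A$ comparison of Proposition~\ref{prop.cwcase12}, there is no value-independent statement about how strongly a consumer prefers one equilibrium to the other. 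A concrete witness of coexistence can be taken from the instance behind Proposition~\ref{prop.1and3}, on which one verifies the displayed piecewise difference directly.

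The delicate point — which I would flag in the final proof — is that the pointwise comparison above is one-signed: no consumer strictly prefers the uniform-$B$ equilibrium, so ``individual consumers may have different preferences'' must be understood as the $v$-dependence of the \emph{magnitude} of a preference whose direction is shared by all consumers, not as some consumers ranking uniform $B$ above the discriminatory equilibrium (the latter is ruled out by $v^*(q)\le p_M\le p_1(q)$ together with the case analysis). The average-welfare half is a routine integration once the two utility profiles are correctly identified; the only real work is getting those profiles right and handling the $q=1/2$ and $q=1$ boundary cases that make the welfare gap strict.
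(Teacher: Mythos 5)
Your pointwise utility profiles $U_U(v)=\max\{v-p_M,\,0\}$ and $U_D(v)=\max\{(v-p_1(q))+q\delta,\,(1-q)\delta\}$ are the right objects, and the case analysis giving $U_D(v)-U_U(v)\ge(1-q)\delta$ for every $v$ is correct; integrating it yields the average-welfare claim. This is a genuine strengthening of the paper's argument, which never compares consumers pointwise: the paper rewrites ex-ante discriminatory welfare as $\int_{v^*(q)}^1(v-v^*(q))f(v)\,dv+(1-q)\delta$ and invokes $v^*(q)\le p_M$ (Lemma~\ref{prop.ordering}) to dominate $\int_{p_M}^1(v-p_M)f(v)\,dv$. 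Both routes establish the first clause.

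The divergence is in the second clause, and there you have exposed a problem in the paper rather than introduced one. The paper's proof asserts that consumers with $v\ge p_M$ prefer uniform~B exactly when $\delta q<p_1(q)-p_M$, and offers a truncated-exponential instance ($\delta=0.5$, $q=1$, $\lambda=1$) claimed to satisfy this. But your inequality $p_1(q)-p_M\le(2q-1)\delta\le q\delta$ is nothing more than Lemma~\ref{prop.ordering}'s $v^*(q)\le p_M$ rearranged, and it forecloses any such instance for $q\in[1/2,1]$. (The paper's numerics are also internally inconsistent: with $\delta=0.5$ and $q=1$ one has $\delta q=0.5$, not $0.055$, and the reported $p_1(1)=0.56324$ does not solve \eqref{eqn:foc} for the truncated exponential.) So no consumer can strictly prefer uniform~B, and the second clause of Proposition~\ref{prop.cwcase13} --- read, as the paper's own proof reads it, as ``some consumers may rank uniform~B above the discriminatory equilibrium'' --- appears to be false. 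Your softer reading (the \emph{magnitude} of the gain varies with $v$, from $(1-q)\delta$ at the bottom to $(p_M-p_1(q))+q\delta$ at the top) is the most that can be salvaged, and you are right to flag explicitly that this is a reinterpretation rather than a proof of the stated claim. One point worth adding to your write-up: since coexistence also fails at $q=1$ (as you note, $r(1,v^*(1),1)\le r(1,p_M,1)<\eta$ contradicts Observation~\ref{obs.existence}), on the entire coexistence range $q\in(1/2,1)$ the preference for discrimination is strict for every consumer, not merely weak.
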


\paragraph{Takeaway} In a game where a discriminatory equilibrium and uniform advertising equilibrium B coexist, buyers prefer the discriminatory equilibrium. Since the uniform advertising equilibrium B exists for an interval of ``low'' $q$, a buyer therefore may prefer less privacy!

\subsection{Welfare Comparative Statics and Preferences over Levels of Privacy}

\begin{restatable}{proposition}{cwcaseone}\label{prop.cwcase1}
In settings where a discriminatory equilibrium exists for a range of $q$, the equilibrium (consumer) welfare can be increasing in $q$.
\end{restatable}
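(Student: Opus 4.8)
The plan is to get a closed form for consumer surplus $CS(q)$ in a discriminatory equilibrium, differentiate it in $q$, reduce the sign of $CS'(q)$ to a transparent inequality, and then exhibit a (non-uniform) instance of the model for which that inequality holds on an interval of $q$ carrying a discriminatory equilibrium. In a discriminatory equilibrium a buyer with $v \ge v^*(q)$ purchases and obtains $(v-p_1(q)) + q\delta$, while a buyer with $v < v^*(q)$ abstains and obtains $(1-q)\delta$, so $CS(q) = \int_{v^*(q)}^{1}[(v - p_1(q)) + q\delta]\,f(v)\,dv + (1-q)\delta\,F(v^*(q))$. Differentiating with the Leibniz rule and using the marginal-consumer identity $v^*(q) - p_1(q) = (1-2q)\delta$ to cancel the boundary terms gives the clean expression $CS'(q) = \delta(1 - 2F(v^*(q))) - p_1'(q)(1 - F(v^*(q)))$. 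From the implicit-function step in the proof of Lemma~\ref{prop.deriv} one has $p_1'(q) = \frac{2\delta\,\iota(q)}{1+\iota(q)}$ with $\iota(q) := -I'(v^*(q)) \ge 0$ and $I(v) = \frac{1-F(v)}{f(v)}$, so $0 \le p_1'(q) < 2\delta$; hence $CS'(q) > 0$ iff $(1+\iota(q))(1 - 2F(v^*(q))) > 2\iota(q)(1 - F(v^*(q)))$. This holds comfortably whenever $v^*(q)$ is near $0$ (so $F(v^*(q))$ is small) and the density is decreasing there (so $\iota(q) < 1$) --- the low-privacy regime in which a large fraction of consumers have pooled onto buying.

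To make this concrete, I would take $F$ triangular, $f(v) = 2(1-v)$ on $[0,1]$ --- then $I(v) = \frac{1-v}{2}$, so $\iota \equiv \frac12$ and the hazard rate $\frac{2}{1-v}$ is nondecreasing, so Assumption~\ref{ass.one} holds --- together with $\delta = 1$ and $g(v) = v$. Optimizing the seller's profit $p(1 - F(p+(1-2q)\delta))$ gives the discriminatory price $p_1(q) = \frac{2q}{3}$ and cutoff $v^*(q) = 1 - \frac{4q}{3}$, valid (away from the corner $v^* = 0$) for $q \in [\frac12,\frac34)$; then $F(v^*(q)) = 1 - (\frac{4q}{3})^2$ and the inequality from the previous paragraph collapses to $\frac{64}{27}q^2 > 1$. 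Thus $CS'(q) > 0$ precisely for $q \in (\frac{3\sqrt3}{8},\,\frac34)$, a nonempty open interval on which $v^*(q) \in (0,1)$.

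Finally I would place a discriminatory equilibrium on a sub-interval of $(\frac{3\sqrt3}{8},\frac34)$. Fix any $q^\star$ there; since $q^\star > \frac12$, $g$ is strictly increasing, and $v^*(q^\star) \in (0,1)$, Lemma~\ref{prop.changeq} yields the strict separation $r(0,v^*(q^\star),q^\star) < r(1,v^*(q^\star),q^\star)$, with both values in $(0,1)$. Choosing $\eta$ strictly between them, and payoffs $s_{1A} = 1-\eta$, $s_{2B} = \eta$, $s_{1B} = s_{2A} = 0$ realizing it (these satisfy $s_{1A}>s_{1B}$ and $s_{2B}>s_{2A}$), makes \eqref{eq.one} hold strictly at $q^\star$, so a discriminatory equilibrium exists there by Observation~\ref{obs.existence}; by continuity of $q \mapsto (v^*(q), r(0,v^*(q),q), r(1,v^*(q),q))$ it persists on a neighborhood of $q^\star$, and intersecting this neighborhood with $(\frac{3\sqrt3}{8},\frac34)$ gives an interval on which a discriminatory equilibrium exists and $CS$ is strictly increasing, which is the proposition. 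The only delicate step is this last one --- verifying the strict posterior separation, so that there is room to place $\eta$; beyond that no extra seller-deviation check is required, since Observation~\ref{obs.existence} already encodes it (the advertiser never observes the price, so the seller's deviation problem is exactly the shifted-monopoly problem already solved by $p_1(q)$).
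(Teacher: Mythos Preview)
Your proof is correct and follows the same overall route as the paper: write down ex-ante consumer surplus in a discriminatory equilibrium, differentiate in $q$, use the marginal-consumer identity to kill the boundary term and obtain $CS'(q)=\delta(1-2F(v^*(q)))-p_1'(q)(1-F(v^*(q)))$, then plug in the implicit-function expression for $p_1'(q)$ and exhibit an instance where the sign is positive. The only real difference is the witness: the paper points back to the uniform example of Section~\ref{s.example} (and alludes to near-constant hazard rates), whereas you build a fresh instance with the triangular density $f(v)=2(1-v)$ and explicitly place $\eta$ between the two posteriors to certify the discriminatory equilibrium via Observation~\ref{obs.existence}. Your construction is self-contained and the computations ($p_1(q)=\tfrac{2q}{3}$, $v^*(q)=1-\tfrac{4q}{3}$, and $CS'(q)>0\iff q>\tfrac{3\sqrt{3}}{8}$) check out, so the execution is if anything tighter than the paper's on the existence step.
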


\paragraph{Takeaway} Note that the example in the previous section already demonstrates that this can be true on some interval of $q$, see for example Figure \ref{fig.csandprofit}. The proof in the appendix demonstrates this formally. Note that this implies the buyer's preferences over different levels of privacy may be complex, and her wefare is not necessarily the monotonically increasing in her privacy as is often assumed.

\begin{observation}
The seller prefers the largest $q$ consistent with discriminatory equilibrium --- he gets a higher price and more demand. If a discriminatory equilibrium ceases to exist at some interior $q$, the seller will at that point prefer that the market provide \emph{more} privacy.
\end{observation}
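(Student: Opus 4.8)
The plan is to track the seller's equilibrium profit as a function of $q$, show it is strictly increasing within the discriminatory regime and strictly above the uniform-equilibrium profit there, and then read off the comparative static at a point where the discriminatory equilibrium disappears.

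First I would write the seller's profit in a discriminatory equilibrium as $\Pi_D(q) = p_1(q)\,(1 - F(v^*(q)))$, where $v^*(q) = p_1(q) + (1-2q)\delta$, exactly as in the proof of Lemma~\ref{prop.deriv}. By Lemma~\ref{prop.deriv}, $p_1(\cdot)$ is strictly increasing and $v^*(\cdot)$ is strictly decreasing on any interval where a discriminatory equilibrium exists; since $F$ is a CDF, $1 - F(v^*(q))$ is non-decreasing in $q$. Thus $\Pi_D(\cdot)$ is the product of a positive strictly increasing factor and a positive non-decreasing factor, hence strictly increasing in $q$. This already gives the first sentence: among all $q$ for which a discriminatory equilibrium exists, the seller strictly most prefers the largest one, where he charges the highest price and sells to the largest mass of consumers.

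Next I would compare $\Pi_D(q)$ with the seller's profit $\Pi_U = p_M\,(1 - F(p_M))$ in a uniform advertising equilibrium (where, by the earlier analysis, the buyer behaves myopically, the seller charges the monopoly price $p_M$, and the profit does not depend on $q$ or on which ad is shown). Lemma~\ref{prop.ordering} gives $v^*(q) \le p_M \le p_1(q)$, so $p_1(q) \ge p_M$ and $1 - F(v^*(q)) \ge 1 - F(p_M)$, whence $\Pi_D(q) \ge \Pi_U$; and for $q > 1/2$ the strict monotonicity in Lemma~\ref{prop.deriv} gives $p_1(q) > p_M$, so (using that a positive mass of consumers purchases, i.e. $v^*(q) < 1$) the inequality $\Pi_D(q) > \Pi_U$ is strict. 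So whenever a discriminatory equilibrium exists at an interior $q$, the seller strictly prefers it to a uniform one.

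For the second sentence, suppose $q_0 \in (1/2,1)$ is a point at which the discriminatory equilibrium ceases to exist, i.e. discriminatory equilibria exist for $q$ in a punctured left-neighborhood of $q_0$ but not at $q_0$, and that the equilibrium played at $q_0$ is a uniform one (Proposition~\ref{prop.case2and3} characterizes when uniform equilibria are available), so the seller earns $\Pi_U$ there. Since $p_1(\cdot)$ and $v^*(\cdot)$ are continuous, $\Pi_D(q) \to p_1(q_0)\,(1 - F(v^*(q_0))) > \Pi_U$ as $q \uparrow q_0$, the strict inequality coming from $q_0 > 1/2$ via the previous paragraph. Hence for $q$ slightly below $q_0$ the seller's equilibrium profit strictly exceeds his profit at $q_0$, so at $q_0$ he strictly prefers the market to provide \emph{more} privacy. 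The step I expect to require the most care is the case in which \emph{no} equilibrium exists at $q_0$, where ``the seller's profit at $q_0$'' is not literally defined; there the honest claim is the one-sided comparison --- the seller's profit at every discriminatory $q$ just below $q_0$ strictly exceeds the monopoly profit $p_M(1-F(p_M))$ he could in any case guarantee himself by posting $p_M$ --- and this is the statement I would phrase precisely rather than the informal ``prefers more privacy.''
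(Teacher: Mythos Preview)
Your argument is correct and is exactly the reasoning the paper relies on: the paper states this as an observation without a separate proof, since it follows directly from Lemma~\ref{prop.deriv} (price increasing, cutoff decreasing in $q$) and Lemma~\ref{prop.ordering} ($v^*(q)\le p_M\le p_1(q)$), which are precisely the facts you invoke. Your additional care about the boundary case where no equilibrium exists at $q_0$ goes a bit beyond what the paper makes explicit, but is consistent with its intent.
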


\paragraph{Takeaway} The seller prefers discriminatory equilibria over uniform advertising equilibria, and the least privacy that is consistent with a discriminatory equilibrium if given the choice over privacy levels.
 
\begin{observation} 
For any level of noise $q$, the advertiser always prefers a discriminatory equilibrium to a uniform equilibrium if both exist.
\end{observation}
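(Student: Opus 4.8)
The plan is to exploit a simple but crucial asymmetry: the advertiser's payoff from any \emph{signal-independent} strategy (``always show $A$'' or ``always show $B$'') depends only on the unconditional distribution of the consumer's type, and hence is the \emph{same} in every equilibrium, regardless of the period-1 cutoff, price, or noise level. Let $\mu := \int_0^1 g(v) f(v)\,dv$ denote the prior probability of type $t_1$ (which is pinned down by $f$ and $g$ alone). If the advertiser commits to showing ad $A$ no matter what $\bhat$ he sees, his expected payoff is exactly $s_{1A}\mu + s_{2A}(1-\mu)$; if he commits to ad $B$ it is $s_{1B}\mu + s_{2B}(1-\mu)$. In particular, the advertiser's payoff in a uniform advertising equilibrium $A$ is $s_{1A}\mu + s_{2A}(1-\mu)$, and in a uniform advertising equilibrium $B$ it is $s_{1B}\mu + s_{2B}(1-\mu)$.

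First I would fix a noise level $q$ at which both a discriminatory equilibrium and a uniform advertising equilibrium exist, and consider the discriminatory equilibrium, with its period-1 cutoff $v^*(q)$. Against the signal structure induced by this cutoff, the advertiser is best-responding: by the discriminatory-equilibrium conditions \eqref{eq.one} (equivalently Observation \ref{obs.existence}), showing $A$ on $\bhat=1$ and $B$ on $\bhat=0$ is an optimal decision rule, so it weakly dominates every other decision rule---in particular the two constant rules ``always $A$'' and ``always $B$''. Evaluating those two constant rules against this same signal structure gives payoffs $s_{1A}\mu + s_{2A}(1-\mu)$ and $s_{1B}\mu + s_{2B}(1-\mu)$ respectively, by the observation above. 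Hence the advertiser's payoff in the discriminatory equilibrium is at least $\max\{s_{1A}\mu + s_{2A}(1-\mu),\; s_{1B}\mu + s_{2B}(1-\mu)\}$, which is at least his payoff in whichever uniform advertising equilibrium exists.

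A more computational route gives the same conclusion: writing the discriminatory payoff as $\Pr(\bhat=1)\bigl[s_{1A}r(1,v^*(q),q)+s_{2A}(1-r(1,v^*(q),q))\bigr] + \Pr(\bhat=0)\bigl[s_{1B}r(0,v^*(q),q)+s_{2B}(1-r(0,v^*(q),q))\bigr]$ and using the law of total probability $\Pr(\bhat=1)r(1,v^*(q),q)+\Pr(\bhat=0)r(0,v^*(q),q)=\mu$, one finds that the discriminatory payoff minus the uniform-$A$ payoff equals $\Pr(\bhat=0)\bigl[(s_{1B}-s_{1A})r(0,v^*(q),q)+(s_{2B}-s_{2A})(1-r(0,v^*(q),q))\bigr]$, which is nonnegative precisely because $r(0,v^*(q),q)\le\eta$ in a discriminatory equilibrium; symmetrically, the discriminatory payoff minus the uniform-$B$ payoff equals $\Pr(\bhat=1)\bigl[(s_{1A}-s_{1B})r(1,v^*(q),q)+(s_{2A}-s_{2B})(1-r(1,v^*(q),q))\bigr]\ge 0$ because $r(1,v^*(q),q)\ge\eta$.

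There is essentially no hard step here; the only point that requires care is that the discriminatory and uniform equilibria have \emph{different} period-1 behavior (different cutoffs and prices), so one cannot naively say ``the advertiser is optimizing over the same environment.'' What makes the argument go through is precisely that a signal-independent strategy is blind to the period-1 cutoff, so the comparison collapses to a within-equilibrium best-response argument in the discriminatory equilibrium alone.
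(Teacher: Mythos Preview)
Your proposal is correct and follows essentially the same approach as the paper: the paper's two-sentence justification observes that the advertiser's payoff in a uniform equilibrium equals his ex-ante payoff from always showing that ad, and that by choosing to act on the signal in a discriminatory equilibrium he must do at least as well as any constant strategy. You spell out the one point the paper leaves implicit---that a signal-independent strategy yields the same payoff $s_{1A}\mu+s_{2A}(1-\mu)$ (or $s_{1B}\mu+s_{2B}(1-\mu)$) regardless of the period-1 cutoff, so the different cutoffs in the two equilibria do not obstruct the comparison---and your computational check via the law of total probability is a nice confirmation, but the core idea is identical.
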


To see this, note that in any uniform equilibrium, the advertiser's net utility is the same as his ex-ante utility from showing that ad (since he ignores information from period 1). Since he chooses to act on the information he gets in a discriminatory equilibrium, his net utility must exceed the ex-ante utility of showing the same ad. 

\paragraph{Takeaway} The advertiser will always prefer levels of $q$ consistent with discriminatory equilibria. Among these, however, he may prefer strictly interior levels of $q$, for example, as demonstrated by Figure \ref{fig.advertiserutility} previously.

\section{Concluding Remarks}

A rich body of work on differential privacy has developed in the computer science literature over the past decade. Broadly caricatured, this literature provides algorithms for accurate data analyses (of various sorts), subject to guaranteeing $\epsilon$-differential privacy to individual entries in the dataset. Typically, $\epsilon$ can be set to be any value (implicitly to be chosen by the entity that controls access to the dataset), and mediates a trade-off between the strength of the privacy guarantee offered to the data owners and the accuracy of the analysis promised to the user of the data. This trade-off is typically viewed in simple terms: higher values of $\epsilon$ (i.e. less privacy) are better for the analyst because they allow for higher accuracy, the reasoning goes, and worse for the privacy-desiring individual. Of course, if the dataset is already gathered, this reasoning is correct.

In this paper, we add some caveats to this folk wisdom in a simple stylized model. The dataset here consists of the individuals' purchase decision of a good. The data analyst in our model is an advertiser. Individuals do not care about the privacy of their purchase decision for its own sake, but rather, care about privacy only insofar as it affects how ads are targeted at them. A crucial point is that in our model, at the time of choosing the privacy policy, these purchase decisions have not yet been made. As a result, the price of the good, the purchase decision of the individual, and the advertising policy of the advertiser all depend on the announced privacy policy. Evaluations of the privacy preferences of seller, advertiser, and buyers must take into account everyone's equilibrium incentives. As we demonstrated, these can be the opposite of the simple static trade-offs we are used to, and reasoning about them correctly can be complex.

As the literature expands from (the already hard) questions of privately analyzing existing datasets, to thinking of setting privacy policies that influence future user behavior and the datasets that result from this behavior, the equilibrium approach we espouse here will be important. We hope this paper serves as a call-to-arms to reasoning about privacy policy in such settings, while also highlighting the difficulties.

\bibliographystyle{plainnat}
\bibliography{repeated}

\begin{thebibliography}{17}
\providecommand{\natexlab}[1]{#1}
\providecommand{\url}[1]{\texttt{#1}}
\expandafter\ifx\csname urlstyle\endcsname\relax
  \providecommand{\doi}[1]{doi: #1}\else
  \providecommand{\doi}{doi: \begingroup \urlstyle{rm}\Url}\fi

\bibitem[Blum et~al.(2015)Blum, Morgenstern, Sharma, and Smith]{BMSS15}
Avrim Blum, Jamie Morgenstern, Ankit Sharma, and Adam Smith.
\newblock Privacy-preserving public information for sequential games.
\newblock In \emph{Proceedings of the 2015 Conference on Innovations in
  Theoretical Computer Science}, pages 173--180. ACM, 2015.

\bibitem[Calzolari and Pavan(2006)]{calzolari2006optimality}
Giacomo Calzolari and Alessandro Pavan.
\newblock On the optimality of privacy in sequential contracting.
\newblock \emph{Journal of Economic Theory}, 130\penalty0 (1):\penalty0
  168--204, 2006.

\bibitem[Chen et~al.(2013)Chen, Chong, Kash, Moran, and Vadhan]{CCKMV13}
Yiling Chen, Stephen Chong, Ian~A Kash, Tal Moran, and Salil Vadhan.
\newblock Truthful mechanisms for agents that value privacy.
\newblock In \emph{Proceedings of the fourteenth ACM conference on Electronic
  commerce}, pages 215--232. ACM, 2013.

\bibitem[Conitzer et~al.(2012)Conitzer, Taylor, and Wagman]{conitzer2012hide}
Vincent Conitzer, Curtis~R Taylor, and Liad Wagman.
\newblock Hide and seek: Costly consumer privacy in a market with repeat
  purchases.
\newblock \emph{Marketing Science}, 31\penalty0 (2):\penalty0 277--292, 2012.

\bibitem[Datta et~al.(2015)Datta, Tschantz, and Datta]{DTD15}
Amit Datta, Michael~Carl Tschantz, and Anupam Datta.
\newblock Automated experiments on ad privacy settings.
\newblock \emph{Proceedings on Privacy Enhancing Technologies}, 1\penalty0
  (1):\penalty0 92--112, 2015.

\bibitem[Dwork and Roth(2014)]{DR14}
Cynthia Dwork and Aaron Roth.
\newblock The algorithmic foundations of differential privacy.
\newblock \emph{Theoretical Computer Science}, 9\penalty0 (3-4):\penalty0
  211--407, 2014.

\bibitem[Dwork et~al.(2006)Dwork, McSherry, Nissim, and Smith]{DMNS06}
Cynthia Dwork, Frank McSherry, Kobbi Nissim, and Adam Smith.
\newblock Calibrating noise to sensitivity in private data analysis.
\newblock In \emph{TCC '06}, pages 265--284, 2006.

\bibitem[Dwork et~al.(2012)Dwork, Hardt, Pitassi, Reingold, and Zemel]{DHPRZ12}
Cynthia Dwork, Moritz Hardt, Toniann Pitassi, Omer Reingold, and Richard Zemel.
\newblock Fairness through awareness.
\newblock In \emph{Proceedings of the 3rd Innovations in Theoretical Computer
  Science Conference}, pages 214--226. ACM, 2012.

\bibitem[Ghosh and Ligett(2013)]{GL13}
Arpita Ghosh and Katrina Ligett.
\newblock Privacy and coordination: computing on databases with endogenous
  participation.
\newblock In \emph{ACM Conference on Electronic Commerce}, pages 543--560,
  2013.

\bibitem[Ghosh and Roth(2013)]{GR11}
Arpita Ghosh and Aaron Roth.
\newblock Selling privacy at auction.
\newblock \emph{Games and Economic Behavior}, 2013.
\newblock ISSN 0899-8256.
\newblock \doi{http://dx.doi.org/10.1016/j.geb.2013.06.013}.
\newblock URL
  \url{http://www.sciencedirect.com/science/article/pii/S0899825613000961}.
\newblock Preliminary Version appeared un the Proceedings of the Twelfth ACM
  Conference on Electronic Commerce (EC 2011).

\bibitem[Kasiviswanathan and Smith(2014)]{KS14}
Shiva~P Kasiviswanathan and Adam Smith.
\newblock On the'semantics' of differential privacy: A bayesian formulation.
\newblock \emph{Journal of Privacy and Confidentiality}, 6\penalty0
  (1):\penalty0 1, 2014.

\bibitem[McSherry and Talwar(2007)]{MT07}
Frank McSherry and Kunal Talwar.
\newblock Mechanism design via differential privacy.
\newblock In \emph{FOCS}, pages 94--103, 2007.

\bibitem[Nissim et~al.(2012)Nissim, Orlandi, and Smorodinsky]{NOS12}
Kobbi Nissim, Claudio Orlandi, and Rann Smorodinsky.
\newblock Privacy-aware mechanism design.
\newblock In \emph{Proceedings of the 13th ACM Conference on Electronic
  Commerce}, pages 774--789. ACM, 2012.

\bibitem[Nissim et~al.(2014)Nissim, Vadhan, and Xiao]{NVX14}
Kobbi Nissim, Salil Vadhan, and David Xiao.
\newblock Redrawing the boundaries on purchasing data from privacy-sensitive
  individuals.
\newblock In \emph{Proceedings of the 5th conference on Innovations in
  theoretical computer science}, pages 411--422. ACM, 2014.

\bibitem[Ryan(2014)]{newsarticle}
Laura Ryan.
\newblock Feds investigate 'discrimination by algorithm, September 2014.
\newblock URL
  \url{http://www.nationaljournal.com/tech/feds-investigate-discrimination-by-algorithm-20140915}.
\newblock [Online; Retrieved 8/5/2015].

\bibitem[Taylor(2004)]{taylor2004consumer}
Curtis~R Taylor.
\newblock Consumer privacy and the market for customer information.
\newblock \emph{RAND Journal of Economics}, pages 631--650, 2004.

\bibitem[Xiao(2013)]{Xiao13}
David Xiao.
\newblock Is privacy compatible with truthfulness?
\newblock In \emph{Proceedings of the 4th conference on Innovations in
  Theoretical Computer Science}, pages 67--86. ACM, 2013.

\end{thebibliography}

\newpage

\appendix

\section{Appendix}

We now show that the only equilibria of this game have a cutoff strategy in period 1. 

\begin{proposition}\label{prop.thresh}
All equilibria have the property that in period 1, there exists a threshold value $v^*$ such that the consumer buys if and only if $v> v^*$.
\end{proposition}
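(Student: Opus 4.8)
The plan is to fix an arbitrary equilibrium and show that the consumer's period-1 best response is necessarily a threshold strategy, so every equilibrium consumer strategy has this form. The crucial observation is that, holding the equilibrium strategies of the seller and advertiser fixed, the consumer's period-2 payoff depends on her value $v$ \emph{only through} her realized purchase decision $b\in\{0,1\}$. This is because the advertiser conditions his ad choice solely on the noisy bit $\bhat$ --- he observes neither $v$, nor the realized type, nor $b$ directly --- and, by the modeling assumption, the consumer's gain $\delta$ from seeing ad $A$ rather than $B$ is independent of her type. Hence the continuation value is a function of $b$ alone, and the consumer's period-1 decision problem reduces to comparing two $v$-independent continuation values plus the (strictly $v$-monotone) current-period surplus.

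Concretely, I would proceed as follows. Let $p$ be the equilibrium price, and let $a_j\in[0,1]$ be the probability with which the advertiser shows ad $A$ upon observing $\bhat=j$, $j\in\{0,1\}$, under the (possibly mixed) equilibrium strategy. For a consumer of value $v$: if she purchases, then $\bhat=1$ with probability $q$ and $\bhat=0$ with probability $1-q$, so her expected period-2 utility is $\delta\,\pi_1$ with $\pi_1:=qa_1+(1-q)a_0$; if she does not purchase, it is $\delta\,\pi_0$ with $\pi_0:=(1-q)a_1+qa_0$. Normalizing the period-1 payoff from not buying to $0$, her total payoff is $(v-p)+\delta\pi_1$ from buying and $\delta\pi_0$ from not buying, so buying is optimal if and only if $v\geq p+\delta(\pi_0-\pi_1)=:v^*$. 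Since $p,\delta,q,a_0,a_1$ are all pinned down by the equilibrium, $v^*$ is a constant not depending on $v$. The payoff from buying is strictly increasing in $v$ while the payoff from not buying is constant in $v$; hence buying is the strict best response precisely on $(v^*,1]$, not buying is the strict best response precisely on $[0,v^*)$, and $v=v^*$ is a measure-zero indifference point. Because in equilibrium the consumer best-responds at (almost) every $v$, the equilibrium strategy is the threshold rule ``buy iff $v>v^*$'' (allowing $v^*\le 0$ or $v^*\ge 1$, i.e.\ always/never buying).

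The only step that needs to be stated with care --- and which I expect to be the sole conceptual obstacle --- is the justification that $\pi_0$ and $\pi_1$ genuinely carry no dependence on the individual consumer's value or type: this rests on the advertiser's period-2 information set being exactly $\{\bhat\}$ (so $a_0,a_1$ are numbers, not functions of $v$) together with the type-independence of the consumer's ad preference, so that conditioning on $b$ the induced distribution over the ad shown is the same for every $v$. Everything after that is a one-line monotonicity argument. One may additionally note that $\pi_1\ge\pi_0$ (e.g.\ using Lemma~\ref{prop.morelikely} with $q\ge 1/2$ to get $a_1\ge a_0$), which yields $v^*\le p$ and matches the discussion in the body, though this is not needed for the threshold property itself.
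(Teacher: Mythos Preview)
Your argument is correct, and it is in fact a cleaner route than the one the paper takes. The paper proceeds by contradiction: it supposes there are values $v'<v$ with $v'$ buying and $v$ not, and then runs a separate case analysis for each of the three equilibrium types (discriminatory, uniform $A$, uniform $B$), deriving $v'\ge v$ in each case. Your approach instead fixes the equilibrium, parametrizes the advertiser's (possibly mixed) strategy by $a_0,a_1\in[0,1]$, and observes directly that the net benefit of buying, $(v-p)+\delta(\pi_1-\pi_0)$, is strictly increasing in $v$ with a single crossing point $v^*$.

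What your version buys: it avoids the mild circularity in the paper's proof, which invokes the three-way equilibrium classification even though that classification is established only \emph{after} the threshold property is in place; it also transparently covers the ``reverse discriminatory'' case ($a_0>a_1$) and arbitrary mixing by the advertiser, neither of which the paper's case analysis treats explicitly. What the paper's version buys: it is slightly more concrete, with explicit payoff expressions for each equilibrium type, which ties the lemma visually to the later analysis. Your closing remark that $\pi_1\ge\pi_0$ (hence $v^*\le p$) is a nice observation but, as you note, is not needed for the threshold property and should not lean on Lemma~\ref{prop.morelikely}, whose proof already presumes a cutoff strategy.
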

\begin{proof}
Assume not.  Then there exists $v, v'$ such that $v'<v$, and in equilibrium consumers with value $v'$ buy in period 1, while consumers with value $v$ do not.  We consider the three possible equilibrium types, and show that each one leads to a contradiction.

If this is a discriminatory equilibrium, then consumers with $\bhat=1$ are shown ad $A$, and consumers with $\bhat=0$ are shown ad $B$.  Then it must be the case that the consumer's utility satisfies:
\[u(v', \text{buy}) \geq u(v', \text{not buy}) \; \; \; \mbox{ and } \; \; \; u(v, \text{not buy}) \geq u(v, \text{buy})  \]
This implies:
\begin{align*}
& u(v', \text{buy}) + u(v, \text{not buy}) \geq u(v', \text{not buy}) + u(v, \text{buy}) \\
\Longleftrightarrow \; \; \; & \left[ v' - p + q \delta \right] + \left[ (1-q) \delta \right] \geq \left[ (1-q) \delta \right] + \left[ v - p + q \delta \right] \\
\Longleftrightarrow \; \; \; & v' \geq v
\end{align*}
This is a contradiction because $v > v'$.

If this is a uniform advertising equilibrium A, then all consumers are shown ad $A$ and receive utility $\delta$ in period 2.  It must be the case that:
\begin{align*}
& u(v', \text{buy}) \geq u(v', \text{not buy})\\
\Longleftrightarrow \; \; \; & v'-p+\delta \geq \delta \\
\Longleftrightarrow \; \; \; & v' \geq p
\end{align*}
Also,
\begin{align*}
& u(v, \text{not buy}) \geq u(v, \text{buy})\\
\Longleftrightarrow \; \; \; & \delta \geq v-p + \va \\
\Longleftrightarrow \; \; \; & p \geq v
\end{align*}
These two facts above imply that $v' \geq v$, which is a contradiction since we have assumed that $v>v'$.

If this is a uniform advertising equilibrium B, then all consumers are shown ad $B$ and receive zero utility in period 2.
\[ u(v', \text{buy}) \geq u(v', \text{not buy}) \; \; \Longleftrightarrow \; \; v' \geq p \; \; \; \; \mbox{ and } \; \; \; \;  u(v, \text{not buy}) \geq u(v, \text{buy}) ; \; \Longleftrightarrow \; \; p \geq v \]
Again, these facts imply $v' \geq v$, which is a contraction.

\end{proof}

\eqoneandtwo*


\begin{proof}
Suppose the distribution $F$ of buyers' values is uniform on $[0,1]$, and first fix $q=1$ (i.e. no noise).

At these values, from equation \eqref{eqn:foc}, we know that the price charged in a discriminatory equilibrium, if it exists, is
\begin{align*}
&p^* = \frac{1+ \delta}{2},
\intertext{while the cutoff of types who buy is}
&v^* = \frac{1- \delta}{2}
\end{align*}

Suppose the distribution of period $2$ types is such that $g(v)$ is the step function that is 0 below $v^*$ and 1 above $v^*$.  In this case, the discriminatory equilibrium exists at $q=1$ because we have not added any noise, and consumers who do not buy are certainly type $t_2$ and should be shown ad $B$, while consumers who purchase are almost certainly type $t_1$ and should be shown ad $1$.

The myopic monopoly price under $F$ is $p_M = 1/2$.  Note that if customers were then to purchase myopically, i.e. whenever their value exceeds the price, then the posterior probability of type $t_1$ given that the customer did not purchase is exactly $\delta$. For $\eta < \delta$, we then have that the seller charging the myopic monopoly price, the customers purchasing myopically, and the advertiser showing everyone ad $A$ constitutes a uniform advertising equilibrium A.

In the same game, we can show that both types of equilibria co-exist for a continuous range of $q <1$ as well.  Recall from Proposition \ref{prop.morelikely} that it is sufficient for a uniform advertising equilibrium $A$ to exist if $r(0,p_M,q) > \eta$.  Given myopic behavior on the part of the consumers, and the seller setting the monopoly price in period 1, the advertiser's posterior having seen noisy bit $\bhat=0$ is as follows:
\begin{align*}
r(0,p_M,q) &= \frac{q \int_0^{1/2} \mathbbm{1}_{v>\frac{1-\delta}{2}}dv + (1-q)\int_{1/2}^1 dv}{q F(1/2) + (1-q)(1-F(1/2))} \\
&= \frac{q \frac{\delta}{2} + (1-q)\frac{1}{2}}{q \frac{1}{2} + (1-q)\frac{1}{2}} \\
&= \frac{q \frac{\delta}{2} + (1-q)\frac{1}{2}}{\frac{1}{2}} \\
&=  \delta q + (1-q)
\end{align*}

A uniform advertising equilibrium A will exist whenever $r(0,p_M,q) > \eta$.  That is, for any $q \in [1/2,1]$ satisfying
\[ \delta q + (1-q) > \eta \; \; \; \Longleftrightarrow \; \; \; q > \frac{\eta - 1}{\delta - 1} \]
Since we have already assumed that $\eta < \delta$, then $1 > \frac{\eta -1}{\delta -1}$, so there is a non-empty range of $q$ satisfying this condition.

We now verify that there is a continuous range of $q$ for which there also exists a discriminatory equilibrium of this game.  By equation \eqref{eqn:foc}, the discriminatory equilibrium price $p_1(q)$ must satisfy $p_1(q) - I(p_1(q) + (1-2q)\delta) = 0$, where $I(v) = \frac{1 - F(v)}{f(v)}$.  Plugging in the distribution $F$ as $U[0,1]$,
\begin{align*}
p_1(q) - (1 - p_1(q) - (1-2q)\delta) &= 0 \\
\Longleftrightarrow \; \; \; p_1(q) &= \frac{1}{2} - (1-2q)\frac{\delta}{2} 
\end{align*}
The Period 1 cutoff value is then 
\[ v^*(q) = p_1(q) + (1-2q)\delta = \frac{1}{2} + (1-2q)\frac{\delta}{2} \]

There exists a discriminatory equilibrium at $q \in [1/2,1]$ if both $r(1,v^*(q),q) > \eta$ and $r(0,v^*(q),q) < \eta$.  By construction, $r(1,v^*(q),q) > \eta$ is satisfied for all $q \in [1/2,1]$.
We now compute the advertiser's posterior $r(0,v^*(q),q)$.
\begin{align*}
r(0,v^*(q),q) &= \frac{q \int_0^{v^*(q)} \mathbbm{1}_{v>\frac{1-\delta}{2}}dv + (1-q)\int_{v^*(q)}^1 dv}{q F(v^*(q)) + (1-q)(1-F(v^*(q)))} \\
&= \frac{q (v^*(q) - \frac{1-\delta}{2}) + (1-q)(1-v^*(q))}{q v^*(q) + (1-q)(1-v^*(q))} \\
&= 1 - \frac{q \frac{1-\delta}{2}}{q v^*(q) + (1-q)(1-v^*(q))} \\
&= 1 - \frac{\frac{1}{2}q (1-\delta)}{(1-q) - (1-2q)[\frac{1}{2} + (1-2q)\frac{\delta}{2}]} \\
&= 1 - \frac{\frac{1}{2}q (1-\delta)}{(1-q) - \frac{1}{2}(1-2q) - \frac{1}{2}(1-2q)^2 \delta} \\
&= 1 - \frac{\frac{1}{2}q (1-\delta)}{\frac{1}{2}[1-(1-2q)^2 \delta]} \\
&= 1 - \frac{q (1-\delta)}{1-(1-2q)^2 \delta} \\
\end{align*}

For there to be a discriminatory equilibrium at $q$, it must be the case that $r(0,v^*(q),q) = 1 - \frac{q (1-\delta)}{1-(1-2q)^2 \delta} < \eta$.  Although there is not a nice closed form description of the $q \in [1/2,1]$ satisfying this condition, we note that this expression is differentiable (and thus continuous), everywhere except when 
\[ 1 = (1-2q)^2 \delta \; \; \; \Longleftrightarrow \; \; \; q = \frac{1 \pm \sqrt{\frac{1}{\delta}}}{2} \]
If we restrict $\delta < 1$, then this $q$ will fall outside of our range of interest, and this expression $r(0, v^*(q),q)$ is differentiable on $[1/2,1]$.

We now take the derivative of $r(0, v^*(q),q)$ with respect to $q$, and see that it is negative, so the function is monotone decreasing.
\[ \frac{\partial r(0, v^*(q),q)}{\partial q} = -\frac{(1-\delta)[1-(1-2q)^2 \delta] - [-2\delta (1-2q)(-2)q(1-\delta)]}{[1-(1-2q)^2 \delta]^2} \]
We are only interested in the sign of this expression, and the denominator is clearly positive, so we will proceed only with the numerator (without the negative sign in front of it).
\begin{align*}
(1-\delta)[1-(1-2q)^2 \delta - 4\delta(1-2q)q] &= (1-\delta)[1-\delta (1-2q)(1+2q)] \\
&= (1-\delta)[1-\delta (1-4q^2)] \\
&= (1-\delta)^2 + (1-\delta) 4 \delta q^2
\end{align*}
Since $0 < \delta < 1$, this expression is positive, so plugging it back into the original expression (with a negative in front) means that $\frac{\partial r(0, v^*(q),q)}{\partial q}$ is negative, so $r(0, v^*(q),q)$ is a monotone decreasing function.  This means that there exists a continuous range of $q$ for which the condition $r(0, v^*(q),q) < \eta$ is satisfied.  Since it is satisfied at $q=1$ and $r(0, v^*(q),q)$ is decreasing in $q$, then the condition must be satisfied for all $q \in [1/2,1]$.

To summarize, when $0 < \eta < \delta < 1$, there exists a non-trivial range of $q$, namely $q \in [\frac{\eta - 1}{\delta - 1},1]$, for which there exists both a discriminatory equilibrium and uniform advertising equilibrium A in this game.

\end{proof}

\eqoneandthree*

\begin{proof}
Consider the same example as in Proposition \ref{prop.eq1and2}, where $v \sim U[0,1]$ and $g(v)$ is 1 if $v>\frac{1-\delta}{2}$ and 0 otherwise.  

There will exist a uniform advertising equilibrium B in this game at $q$ if and only if $r(1, p_M, q) < \eta$, where $p_M=\frac{1}{2}$.  The advertiser's posterior in this setting after seeing $\bhat=1$ is
\begin{align*}
r(1,p_M,q) &= \frac{(1-q) \int_0^{1/2} \mathbbm{1}_{v>\frac{1-\delta}{2}}dv + q\int_{1/2}^1 dv}{q F(1/2) + (1-q)(1-F(1/2))} \\
&= \frac{(1-q) \frac{\delta}{2} + q \frac{1}{2}}{q \frac{1}{2} + (1-q)\frac{1}{2}} \\
&= \frac{(1-q) \frac{\delta}{2} + q \frac{1}{2}}{\frac{1}{2}} \\
&= (1-q) \delta + q 
\end{align*}
A uniform advertising equilibrium B exists for all $q$ satisfying
\[ (1-q) \delta + q < \eta \; \; \; \Longleftrightarrow \; \; \; q < \frac{\eta - \delta}{1-\delta} \]
Restricting $\eta > \frac{1+\delta}{2}$ ensures that there is a non-empty interval of $q \in [1/2,1]$ satisfying this condition.

A discriminatory equilibrium exists for any $q$ such that $r(1, v^*(q), q) > \eta$ and $r(0, v^*(q), q) < \eta$.  Recall from the proof of Proposition \ref{prop.eq1and2} that the equilibrium price is $p_1(q) = \frac{1}{2} - (1-2q)\frac{\delta}{2}$ and the equilibrium cutoff value is $v^*(q) =  \frac{1}{2} + (1-2q)\frac{\delta}{2}$.  Also recall that $r(0, v^*(q), q) < \eta$ is equivalent to the condition $1 - \frac{q (1-\delta)}{1-(1-2q)^2 \delta} < \eta$.  We now expand $r(1, v^*(q), q)$.
\begin{align*}
r(1, v^*(q), q) &= \frac{(1-q) \int_0^{v^*(q)} \mathbbm{1}_{v>\frac{1-\delta}{2}}dv + q \int_{v^*(q)}^1 dv}{q F(v^*(q)) + (1-q)(1-F(v^*(q)))} \\
&= \frac{(1-q) (v^*(q) - \frac{1-\delta}{2}) + q(1-v^*(q))}{(1-q) v^*(q) + q(1-v^*(q))} \\
&= 1 - \frac{(1-q) \frac{1-\delta}{2}}{(1-q) v^*(q) + q (1-v^*(q))} \\
&= 1 - \frac{\frac{1}{2}(1-q) (1-\delta)}{q + (1-2q)[\frac{1}{2} + (1-2q)\frac{\delta}{2}]} \\
&= 1 - \frac{\frac{1}{2}(1-q) (1-\delta)}{q + \frac{1}{2}(1-2q) + \frac{1}{2}(1-2q)^2 \delta} \\
&= 1 - \frac{\frac{1}{2}(1-q) (1-\delta)}{\frac{1}{2}[1+(1-2q)^2 \delta]} \\
&= 1 - \frac{(1-q) (1-\delta)}{1+(1-2q)^2 \delta}
\end{align*}

To finish specifying the game parameters, set $\delta = .9$ and $\eta = .984$.  With these parameters, both discriminatory equilibria and uniform advertising equilibria B exist for a continuous range of $q$. We will verify this for one $q$ value in this range, namely $q=.8$.  First, observe that a uniform advertising equilibrium B exists because $r(1, p_M, .8) = .98 < .984 = \eta$.  Next, observe that $r(0, v^*(.8), .8) \approx .88166 < .984 = \eta$ and $r(1, v^*(.8), .8) \approx .98484 > .984 = \eta$.  Thus a discriminatory equilibrium exists as well.
\end{proof}

\cwcaseoneandthree*

\begin{proof}
In any uniform advertising equilibrium B,  the myopic monopoly price is charged, and by assumption, all consumers are shown the inferior ad. Therefore the ex-ante consumer welfare is $\int_{p_M}^1 (v- p_M) f(v) dv$.

At any $q$ where a discriminatory equilibrium exists, the ex-ante consumer welfare is:
\begin{align*}
&\int_{v^*(q)}^1 (v- p_1(q)) f(v) dv + (q (1-F(v^*(q))) + (1- q) F(v^*(q)))\delta\\
=& \int_{v^*(q)}^1 (v- p_1(q)- (1-2q)\delta) f(v) dv + (1-q) \delta.
\end{align*}
Note that the latter term is positive. Further, recall from Proposition \ref{prop.ordering}, $v^*(q) = p_1(q) + (1-2q) \delta$, and that $v^*(q) \leq p_M$. Therefore $\int_{v^*(q)}^1 (v- p_1(q)- (1-2q)\delta) \geq \int_{p_M}^1 (v- p_M) f(v) dv.$

We will now see that if both a discriminatory equilibrium and a uniform advertising equilibrium B exist, then consumers with values $v \in [0, p_M]$ will prefer the discriminatory equilibrium, while the preferences of consumers with values $v \in [p_M, 1]$ depend on the game parameters $\delta$ and $\eta$, as well as the noise level $q$. 

For consumers with values $v \in [0, v^*(q)]$, they prefer the discriminatory equilibrium because they still don't purchase the good in period 1, but they have a chance at the better ad in period 2.  For consumers with values $v \in [v^*(q), p_M]$, under the discriminatory equilibrium, they receive utility $(v - p_1(q)) + \delta q$ for buying the good at price $p_1(q)$ and being shown the better ad with probability $q$.  Under the uniform advertising equilibrium B, they receive utility $0$ for not buying in period 1, and then being shown ad $B$ with probability 1.  For all consumers with value in this range, the discriminatory equilibrium is preferred because 
\[ v - p_1(q) + q \delta > (1-q)\delta > 0, \]
where the first inequality is because the consumer maximized her utility by buying in the discriminatory equilibrium, and the second inequality is because $\delta > 0$ and $q <1$.

For consumers with values $v \in [p_M,1]$, under the discriminatory equilibrium, they receive utility $v - p_1(q) + \delta q$ for buying the good at price $p_1(q)$ and being shown the better ad with probability $q$.  Under the uniform advertising equilibrium B, they receive utility $v-p_M$ for buying at price $p_M$ in period 1, and then being shown ad $B$ with probability 1.  These consumers will prefer the discriminatory equilibrium if and only if 
\[ v - p_1(q) + \delta q > v - p_M \; \; \; \Longleftrightarrow \; \; \; \delta q > p_1(q) - p_M \]
Intuitively, the term $\delta q$ captures the consumer's bonus in period 2 from the possibility of being shown the better ad.  The term $p_1(q) - p_M$ is the additional amount the consumer must pay in period 1 to get the good.  Then the consumer will prefer discrimination whenever the increase in utility from seeing the better ad outweighs the increased price she must pay in period 1.

In the example used in Propositions \ref{prop.eq1and2} and \ref{prop.1and3}, the discriminatory equilibrium is preferred:
\begin{align*} \delta q - (p_1(q) - p_M) &= \delta q - \left(\frac{1}{2} + (2q-1)\frac{\delta}{2} - \frac{1}{2}\right) \\
&= \delta q - (2q-1)\frac{\delta}{2} \\
&= \delta \left(q - \frac{1}{2} \right) > 0
\end{align*}
Since $\delta >0$ and $q > 1/2$, this term is always positive, so discrimination is always preferred in this game.  However, this is not necessarily the case in all games.  We'll next see an example of a game where the players with values $v \in [p_M, 1]$ prefer the uniform advertising equilibrium B.


Consider now the value distribution that is inverse exponential, with parameter $\lambda$, truncated to lie in the domain $[0,1]$.  That is, 
\[ f(v) = \frac{\lambda e^{\lambda v}}{e^{\lambda}-1} \mbox{ and } F(v) = \frac{e^{\lambda v} - 1}{e^{\lambda} - 1} \]
The inverse hazard rate of this function is 
\[ I(v) = \frac{1 - F(v)}{f(v)} = \frac{1 - \frac{e^{\lambda v} - 1}{e^{\lambda} - 1}}{ \frac{\lambda e^{\lambda v}}{e^{\lambda}-1}} = \frac{e^{\lambda} - e^{\lambda v}}{\lambda e^{\lambda v}} = \frac{1}{\lambda} (e^{\lambda(1-v)} - 1) \]

Then the uniform advertising equilibrium $B$ price $p_M$ solves 
\[ p_M - \frac{1}{\lambda} (e^{\lambda(1-p_M)} - 1) = 0 \]
and the discriminatory equilibrium price $p_1(q)$ solves 
\[ p_1(q) - \frac{1}{\lambda} (e^{\lambda(1-v^*(q))} - 1) = 0, \]
where $v^*(q) = p_1(q) + (1-2q) \delta$. 

Choosing $\delta = .5$, $q=1$, and $\lambda=1$, we see that $p_1(q) = .56324$ and $p_M = .557146$.  Thus
\[ \delta q = .055 <  .06094 = p_1(q) - p_M, \]
so these consumers prefer the uniform advertising equilibrium B.


\end{proof}

\cwcaseone*

\begin{proof}
To simplify calculations, suppose again that valuations are distributed on the entire positive real line. Next consider ex-ante equilibrium welfare of the buyer as a function of $q$ in the discriminatory equilibrium:
\begin{align*}
&\int_{v^*(q)}^{\infty} (v-p_1(q)) f(v) dv + q \delta (1-F(v^*(q))) + (1-q) \delta F(v^*(q)).
\intertext{Differentiating with respect to $q$ and collecting terms we have:}
&-f(v^*(q)) v^{*\prime}(q) (v^*(q) - p_1(q) - (1-2q) \delta) + \delta(1-2F(v^*(q))) - p_1'(q) (1-F(v^*(q))).
\intertext{Recall that by the definitions of $v^*(q)$ we have }
&v^*(q) - p_1(q) - (1-2q)\delta =0
\intertext{Therefore, the derivative of ex-ante welfare in a discriminatory equilibrium w.r.t.~$q$ equals}
&\delta(1-2F(v^*(q))) - p_1'(q) (1-F(v^*(q))).
\intertext{Recall from the proof of Lemma \ref{prop.deriv} we have}
&p_1'(q) - I'(p_1 + (1-2q)\delta) (p_1'(q) - 2 \delta) =0,\\
\implies & p_1'(q) = 2q \left(\frac{I'(p_1 + (1-2q)\delta)}{I'(p_1 + (1-2q)\delta)-1} \right).
\intertext{Substituting in, the derivative of welfare w.r.t.~$q$ equals}
&\delta(1-2F(v^*(q))) - 2q \left(\frac{I'(p_1 + (1-2q)\delta)}{I'(p_1 + (1-2q)\delta)-1} \right) (1-F(v^*(q))).
\end{align*}
Note that for $I'$ small (e.g. close to $0$ like an exponential distribution that has constant hazard rate), and $F(v^*)< \frac{1}{2}$, this is positive.


To specify one such game, the example given in Section \ref{s.example} has consumer welfare that is increasing in $q$, for $q$ greater than roughly $0.67$.  (See Figure \ref{fig.csandprofit}.)  Selecting, e.g., $\eta = \frac{1}{2}$ in that game will ensure that discriminatory equilibria exist for all $q$ in that range.

\end{proof}

\end{document}